\newtheorem{theorem}{Theorem}[section]
\newtheorem {proposition}[theorem]{Proposition}
\theoremstyle{definition}
\newtheorem{definition}[theorem]{Definition}
\newtheorem{example}[theorem]{Example}
\newtheorem{examples}[theorem]{Examples}
\theoremstyle{remark}
\newtheorem{remark}[theorem]{Remark}
\newtheorem{remarks}[theorem]{Remarks}
\numberwithin {equation}{section}
\def\ra{{\rightarrow}}
\def\r{{\mathbb{R}}}
\begin{document}
\title{Local Graph Embeddings Based on Neighbors Degree Frequency of Nodes}

\author{Vahid Shirbisheh}
%\address{}
\email{shirbisheh@gmail.com}
%\subjclass [2000] {Primary -----; Secondary -----}

\date{\today}
\keywords{Graph Embeddings, Vector Representation of Nodes, Matrix Representation of Nodes, Local Features of Graphs, Centrality Measures, Closeness Centrality, PageRank, Parametric Centrality Measures, Dynamic Graphs, Inductive Graph Deep Learning Models, Graph Machine Learning, Graph Feature Engineering.}

\begin{abstract} 
	We propose a local-to-global strategy for graph machine learning and network analysis by defining certain local features and vector representations of nodes and then using them to learn globally defined metrics and properties of the nodes by means of deep neural networks. By extending the notion of the degree of a node via Breath-First Search, a general family of {\bf parametric centrality functions} is defined which are able to reveal the importance of nodes. We introduce the {\bf neighbors degree frequency (NDF)}, as a locally defined embedding of nodes of undirected graphs into euclidean spaces. This gives rise to a vectorized labeling of nodes which encodes the structure of local neighborhoods of nodes and can be used for graph isomorphism testing. We add flexibility to our construction so that it can handle dynamic graphs as well. Afterwards, the Breadth-First Search is used to extend NDF vector representations into two different matrix representations of nodes which contain higher order information about the neighborhoods of nodes. Our matrix representations of nodes provide us with a new way of visualizing the shape of the neighborhood of a node. Furthermore, we use these matrix representations to obtain feature vectors, which are suitable for typical deep learning algorithms. To demonstrate these node embeddings actually contain some information about the nodes, in a series of examples, we show that PageRank and closeness centrality can be learned by applying deep learning to these local features. Our constructions are flexible enough to handle evolving graphs. In fact, after training a model on a graph and then modifying the graph slightly, one can still use the model on the new graph. Moreover our methods are inductive, meaning that one can train a model based on a known graph and then apply the trained model on another graph with a similar structure. Finally, we explain how to adapt our constructions for directed graphs. We propose several methods of feature engineering based on higher order NDF embeddings and test their performance in several specific machine learning algorithms on graphs.
\end{abstract}
\maketitle

\tableofcontents

%%%%%%%%%%%%%%%%%%%%%     Section     %%%%%%%%%%%%%%%%%%%%%

\section{Introduction}
\label{sec:intro}

A fundamental challenge in graph machine learning is that graphs in real world applications (aka networks) have a {\it large} and {\it varying} number of nodes (vertices) and edges (links or relationships). Therefore machine learning models on evolving graphs require a considerable time for feature extraction, training and inference. This limits their practical usage in real time applications concerning dynamic graphs. On the other hand, many graph algorithms and machine learning algorithms on graphs rely on operations involving the adjacency matrix or the Laplacian matrix of graphs which are computationally very expensive on large graphs. Time complexity problem also arises in the computation of various globally defined centrality measures, such as closeness centrality. Therefore, for machine learning on large dynamic graphs and generally for large network analytics, one wishes for two things: feature engineering based on local structures of graphs and inductive algorithms suitable for dealing with dynamic graphs.  

In this work, we try to fulfill both these wishes by proposing various locally defined embeddings of nodes into euclidean vector spaces and then designing inductive algorithms to learn globally defined features and properties of nodes of graphs. We only use the adjacency lists of graphs and never get involved with complexity of matrix computations on adjacency matrices or Laplacian matrices of graphs. Among other benefits, this makes our methods more suitable for parallel and distributed computations, although we do not propose any parallel algorithm in this article.

The adjective ``local'' in our nomenclature has a very specific and concrete meaning. It means we use the Breadth-First Search (BFS) algorithm to traverse around the neighborhood of a node and collect the local information. Therefore, implementations of our methods may include a natural number parameter to determine how many layers of the BFS should be performed. Another basic concept that we use is the degree of nodes. Only by combining these two very basic notions, we are able to design several ruled based node representations that contain a great deal of information about the local structure of the neighborhoods of nodes. 

These representations of nodes have several theoretical and machine learning applications. They can be used in graph isomorphism testing. They gives rise to a local approximation of neighborhoods of nodes by simple trees. They provide a simple way of visualizing the neighborhoods of nodes as color maps. They can be considered as feature vectors to be fed into deep learning algorithms to learn global properties of nodes. Finally we show how one can perform feature engineering on them to extract richer feature vectors suitable for specific downstream tasks. These applications are just the tip of iceberg and we expect our methods of graph representation find more applications in other areas of machine learning on graphs and network analysis. 

We begin with exploiting the BFS and the degrees of nodes to define a general family of parametric centrality functions on undirected graphs in Section \ref{sec:parametric-centrality}. These centrality functions are defined as certain summations. We demonstrate the prospect of these parametric centrality functions on very small graphs appearing in network science. However, they are not our primary objective, but they serve us mainly as the simple prototypes and introductory constructions that guide us to more sophisticated methods and constructions. 

To achieve our aforementioned goals in machine learning on huge dynamic graphs, we propose a decomposition of the degrees of nodes into vectors in Section \ref{sec:ndf_vec}. Our way of decomposing the degrees of nodes, which is the main source of our various vector representations, is the counting of occurrences of different degrees among the neighbors of nodes, hence the name {\bf neighbors degree frequency} or shortly {\bf NDF}. It is a vector representation of nodes, namely a mapping that maps a node to its NDF vector inside a euclidean vector space. In other words, it is a node embedding. The NDF vector of a node contains not only the information of its degree, but also the information of the degrees of its neighbors. This two layers of information naturally lays out a perspective of the neighborhood of the node and it can be interpreted and utilized in several ways. For instance, we show that the NDF vector of a node amounts to an approximation of its neighborhood by means of a tree of height 2. We also interpret the NDF vectors as a labeling of nodes that needs to be preserved by isomorphisms of graphs. In this way, NDF vectors and their higher order generalizations find application in graph isomorphism testing. The details of these topics are presented in Subsections \ref{subsec:isom} and \ref{subsec:rndfc-equivalence}.

Whenever there exit a larger number of different degrees in the graph, bookkeeping of the frequency of degrees among neighbors of nodes can increase the space complexity of algorithms. On the other hand, the set of occurring degrees in a dynamic graph can change with every node/edge added or removed from the graph. Therefore we propose a form of aggregation in the degree frequency counting in Section \ref{sec:dynamic}. This gives rise to a diverse family of dynamic NDF vector representations which are capable of both reducing the dimension of the NDF vectors and handling changes in dynamic graphs easily. Dynamic NDF vectors depend on how we partition the set of possible degrees in a graph, so we discuss several practical and intuitively reasonable ways of implementing these partitions.   

Two layers of information about each node does not seem enough for employing NDF vectors as feature vectors to fed into deep learning algorithms. Therefore we extend the definition of NDF vector representations into two different higher order NDF matrix representations in Section \ref{sec:ndf_mat}. Every row of these matrix representations of nodes is an aggregation of NDF vectors of layers of the BFS ({\bf NDFC} matrices) or the degree frequency of layers ({\bf CDF} matrices). Besides using these matrix representations as feature vectors in deep learning algorithms, they have other applications. Like NDF vectors, higher order NDF matrices can be used in graph isomorphism testing. They perform better than color refinement algorithm in several examples. One can also use these matrices to visualize the neighborhoods of nodes in the form of color maps. 

In Section \ref{sec:pagerank}, we flatten NDFC matrices as one dimensional arrays and consider them as feature vectors associated to nodes of graphs. We feed these feature vectors into a simple feedforward neural network model and obtain a supervised deep learning algorithm to learn and predict PageRanks of nodes in undirected graphs. We test some other ways of feature engineering and a special form of convolutional neural networks too. We also show how our method for learning PageRank is applicable for predicting PageRank in dynamic and evolving graphs and even in unseen graph. 

We delve further into feature engineering using our higher order NDF matrix representations in Section \ref{sec:aggregations-closeness}. We consider the un-normalized version of CDF matrices and perform a specific type of aggregation on them. The result is a set of feature vectors suitable for learning the closeness centrality of nodes by means of  feedforward neural networks. Our method in this section are also inductive and easily applicable to dynamic graphs. 

PageRanks of nodes are usually computed in directed graphs, specifically in the network of webpages and hyperlinks in the internet. Therefore one would like to extend the method developed for learning PageRanks of undirected graphs, in Section \ref{sec:pagerank}, to directed graphs. We explain how NDF embeddings and their higher order generalizations can be defined on directed graphs in Section \ref{sec:digraphs}. We also propose a minor change in the definition of NDFC matrices which is needed to take into account the adverse effect of certain directed edges in PageRank. Afterwards, we apply the directed version of NDFC matrices for learning the PageRanks of directed graphs. 

Our methods and constructions depend only on very basic concepts in graph theory. On the other hand, they have shown several early successes in learning PageRank and closeness centrality by means of deep learning. Therefore we believe the theory of NDF embeddings and its higher order generalizations will find a wide range of applications in graph theory, network science, machine learning on graphs and related areas. Besides, it appears that our constructions are new in graph theory and computer science, and so there are so many details that need a through investigation to mature the subject. 

\noindent{\large {\bf Related works. }} The author is new in network science, graph representation learning and machine learning on graphs and does not claim expertise in these research areas. That being said, we believe almost all constructions and methods developed in this article are new and have not appeared in the work of other researchers. However, near to the completion of this work, we found out that a very special case of what we call dynamic NDF vectors appeared before in the work of Ai et al \cite{AZCSL}. They introduced a centrality measure based on the distribution of various degrees among the neighbors of nodes. They use iterated logarithm to associate a vector to each node. In our terminology, this vector is the dynamic NDF vector associated to the list of starting points $\{1, 2, 3, 5, 17, \cdots, 2^n+1, 2^{2^n} + 1 \}$, see \cite{AZCSL} and Section \ref{sec:dynamic} for details and precise comparison. We also notice that what we call the NDF-equivalence among nodes of a graph coincides with the coloring of the nodes obtained after two rounds of color refinement, see Subsection \ref{subsec:isom} for details. 

In the following sections, all graphs are assumed undirected and simple, with no loops and multiple edges, except in Section \ref{sec:digraphs} where we deal with directed graphs. We denote graphs using the notation $G=(V, E)$, meaning $G$ is the graph, $V$ is the set of its vertices that we call them {\it nodes}, and $E$ is the set of {\it edges} of the graph. 

%%%%%%%%%%%%%%%%%%%%%     Section     %%%%%%%%%%%%%%%%%%%%%
\section{Parametric Centrality Functions}
\label{sec:parametric-centrality}

The Breadth-First Search algorithm in graphs has a mathematical interpretation in terms of a metric structure on the set of nodes. Here, we review this point of view and use it to associate a certain sequence of numbers to each node of a graph. We show how these sequences of numbers may be used to construct parametric centrality measures applicable in network science.   

In mathematics, a {\it metric space} is a set $X$ equipped with a {\it metric} function $d: X\times X\ra [0,+\infty)$ such that for all $x, y, z\in X$, we have 
\begin{itemize}
	\item [(i)] $d(x,y) = d(y,x)$ {\it (symmetry)}.
	\item [(ii)] $d(x,y) \geq 0$, and the equality holds if and only if $x = y$ {\it (positivity)}. 
	\item [(iii)]$d(x,z) \leq d(x,y) + d(y,z)$ {\it (triangle inequality)}.
\end{itemize}
A metric space is usually denoted by the pair $(X,d)$. 

The {\it natural metric space structure} on the set of nodes of a graph $G=(V, E)$ is defined by the distance of every two nodes as the metric function. We recall that the distance $d(u, v)$ of two nodes $v, u\in V$ is defined as the number of edges in any shortest path connecting $u$ and $v$. In order to this definition makes sense, here, we assume that $G$ is connected. With this assumption, it is easy to verify the axioms of a metric space for $(V, d)$. By adding $+\infty$ to the target space of the metric function and defining the distance of two nodes in two distinct components to be $+\infty$, one can extend this definition to all undirected graphs, possibly disconnected. We use this extended definition when the graph under consideration is not connected. Although $d$ is a metric on the set $V$, we may interchangeably call it the natural metric on the graph $G$. One notes that this metric structure on $V$ induces a {\it discrete topology} on $V$, meaning every subset of $V$ is both open and closed in this topology. Therefore we drop the terms {\it open and closed} in our discussion.

The notion of a metric function extends the euclidean distance in $\r^n$. Thus similar to  intervals, disks and balls in $\r$, $\r^2$ and $\r^3$, respectively, we can define corresponding notions in any metric space as well. However, since most graph drawings are in two dimensions, we borrow the terms {\it "disk"} and {\it "circle"} from $\r^2$ to use in our terminology. 

Given an undirected graph $G=(V,E)$, for every node $v\in V$ and integers $k = 0,1,2,\cdots$, we define the {\bf disk} of radius $k$ centered at $v$ by 
\begin{equation*}
	\label{eq:disk}
	D_k(v) = \{u \in V; d(v, u) \leq k \},
\end{equation*}
and similarly, the {\bf circle} of radius $k$ centered at $v$ is defined by 
\begin{equation*}
	\label{eq:circle}
	C_k(v) = \{u \in V; d(v, u) = k \}.
\end{equation*}
Next, computing the cardinality of the above circles defines a sequence of functions $S=\{s_k \}_{k\geq 0}$ on $V$, i.e. 
\begin{equation*}
	\label{eq:seq_sizes}
	s_k(v) = |C_k(v)|, \quad \forall v \in V.
\end{equation*}
Here $S$ stands for the word ``{\it size}'' and let us call this sequence the {\bf sizes of circles} around a node.
Now, we are ready to define the general family of parametric functions on graphs.

\begin{definition}
	Let $G=(V, E)$ be an undirected graph. To any given sequence $\Lambda=\{\lambda_k \}_{k\geq 0}$ of real numbers, we associate a parametric function $\varphi_\Lambda : V \ra \r$ defined by
	\[
	\Phi_\Lambda(v) = \sum_{k = 0}^{\infty} \lambda_k s_k(v), \forall v\in V.
	\]
	We call this function the {\bf parametric centrality function} associated with $\Lambda$. 
\end{definition} 
When $\Lambda$ is known, we may drop it from our notation. We note the followings:
\begin{remarks}
	\begin{itemize}
		\item [(i)] Since we are dealing with finite graphs, the number of non-zero functions in the sequence $\{s_k \}_{k\geq 0}$ is bounded by the diameter of the graph (plus one). Therefore the above series is always finite and well defined.
		\item [(ii)] By choosing $\lambda_1 = 1$ and $\lambda_k = 0$ otherwise, we get the degree centrality (up to a scaling factor).
		\item [(iii)] In practice, we let only first few parameters in $\Lambda$ to be non-zero. Thus these parametric functions are indeed locally defined features on graph nodes. 
		\item [(iv)] By carefully choosing $\Lambda$ according to the downstream applications, $\Phi_\Lambda$ can be regarded as a {\it parametric centrality measure} of graph nodes, see Examples \ref{exam:countries_1}, \ref{exam:main_parametric} and \ref{exam:three_classic} below.
		\item [(v)] Even the sequence $\{s_k\}_{k\geq 0}$ contains some information about the local structure of the graph around its nodes. For instance, comparing the value $s_2$ against another local feature of the graph gives rise to a measure of the inter-connectivity of neighborhoods of nodes, see Proposition \ref{prop:local-tree-condition}.
		\item [(vi)] The elements of the sequence $\{s_k\}$ increase in higher density neighborhoods. Alternatively, one can use the sequence $\{\frac{1}{s_k}\}$ (or better $\{\frac{1}{s_k^\alpha + \varepsilon}\}$ for some $\alpha, \varepsilon > 0$) in lieu of $\{s_k\}$ to define a family of parametric functions such that they decrease on nodes in higher density neighborhoods.
		\item [(vii)] When nodes or edges are weighted, one can incorporate those weights in the definition of the sequence $\{s_k\}$ to obtain more meaningful centrality measures on graph nodes. 
	\end{itemize}
\end{remarks}
In the following examples, we expand on the prospect of the parametric centrality functions in network science.  Here, we consider a relatively simple way of defining the sequence $\Lambda$ of parameters. For some $0<p<1$, we set $\lambda_k= p^{k - 1}$ for $k > 0$ and $\lambda_0=0$. For the sake of simplicity, let us call the parametric centrality function associated to this $\Lambda$ the {\bf $p$-centrality function} and denote it simply by $\Phi_p$. It is interesting to note that this very simple family of parametric centrality measures can mimic (to some extent) some of the well known centrality measures. 
\begin{example}
	\label{exam:countries_1}
	Let $G = (V, E)$ be the network of neighboring countries in a continent, i.e. the nodes are countries and two countries are connected if they share a land border. It is a reasonable assumption that the financial benefits of mutual trades between two countries decrease by some factor $0<p<1$ for every other country lying in a shortest path connecting them. Therefore the total economical benefits of continental trade for each country depends on both the distance of countries and the number of other countries in each circle around the country. In mathematical terms, for a given country $X$, the financial benefits of trade between $X$ and countries with distance $k$ from $X$ is $p^{k-1} s_k(X)$, because there are exactly $k - 1$ countries in any shortest path connecting them. The total benefits of continental trade is the summation over all distances $k$, i.e. $\sum_{k=1}^\infty p^{k-1} s_k(X)$ and this is exactly $\Phi_p(X)$. Therefore the $p$-centrality function $\Phi_p$ models the total financial benefits of continental trade for each country according to the graph structure of this network (and our assumptions!).
\end{example} 

\begin{figure}
	\centering
	\includegraphics[width=350px]{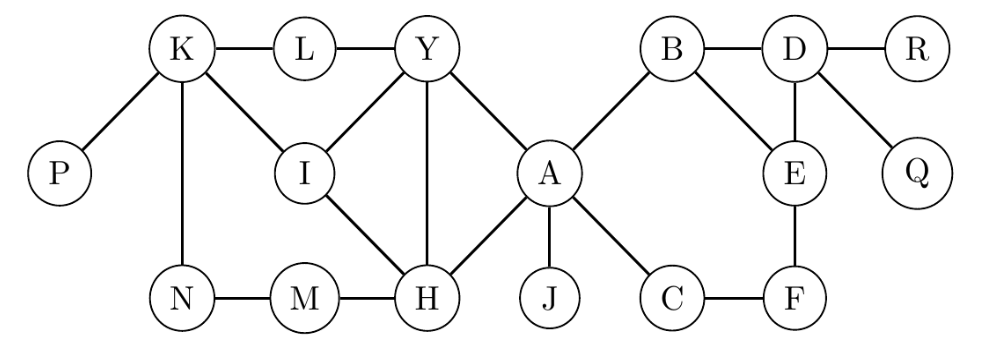}
	\caption{The graph in Examples \ref{exam:main_parametric}, \ref{exam:mndf}, \ref{exam:tree_approx}, \ref{exam:g1_g2_ndf}(iii) and \ref{exam:main_graph_dndf}. } \label{fig:main-graph}
\end{figure}

We use the graph shown in Figure \ref{fig:main-graph} in several occasions in this paper to explain various concepts and constructions. It was originally designed to explain the content of Section \ref{sec:ndf_vec}. However, it appears to be explanatory for prospects of parametric centrality functions as well.
\begin{example}
	\label{exam:main_parametric}
	Let $G = (V, E)$ be the undirected graph shown in Figure \ref{fig:main-graph}. We heuristically set $p=0.6$ and compute $p$-centrality function $\Phi_p$ on nodes of $G$. The result is divided by the scaling factor 21.15 in order to be comparable to closeness centrality. In Table \ref{table:main_parametric1}, we compare the sorted dictionary of $p$-centrality function with the sorted dictionaries of closeness, eigenvector and betweenness centrality measures (computed using the Python package NetworkX \cite{HSS}). In each case, we arrange the result in decreasing order to show that the order of nodes according to $p$-centrality function is similar to the order with respect to closeness centrality. One observes that those nodes which are not in the same order happen to have very close $p$-centrality as well as closeness centrality. The order of values of $p$-centrality function shows less similarity with the order of values of eigenvector centrality. One can also spot more disarray while comparing $p$-centrality function with betweenness centrality.  In Table \ref{table:main_parametric2}, we compare the values of $p$-centrality function (again divided by the scaling factor 21.15) with the values of closeness centrality. It shows that $p$-centrality function fluctuates around closeness centrality with an average error of about 3.938 percent. 
	
	The diameter, the maximum distance between two nodes, of graph $G$ is 7. So for computing $p$-centrality function, we only need to consider circles of radius at most 7. If we restrict our computations to circles with strictly smaller radiuses, we get less accurate results. For instance, the average percentage of differences between closeness and $p$-centrality (the last row of Table \ref{table:main_parametric2}) will be 4.149 percent (respectively 4.411, 5.868, 13.401, 30.941 and 65.599 percent) if we use circles of radius at most 6 (respectively 5, 4, 3, 2 and 1). This suggests that the higher maximum radius, the more precise estimation of closeness centrality. It also shows that by restricting our computations to circles of radius at most 5, we lose about 0.473 percentage of accuracy, but it decreases the computational burden. Therefore, in larger graphs, the maximum radius of circles involved in the computation can be a notable hyperparameter to establish a balance between precision and time complexity of the computation. Finally we must mention that in a low maximum radius computation of $p$-centrality, one should consider a different scaling factor to achieve a better estimation. For example, in the computation of $p$-centrality function of $G$ involving circles of at most radius 2 and with the scaling factor 15.7, the average percentage of differences drops to 14.631 percent (from 30.941 percent with the scaling factor 21.15).
	\begin{table}
		%\centering
		\begin{tabular}{|c|c||c|c||c|c||c|c|}\hline
			\multicolumn{2}{|c||}{Closeness} & \multicolumn{2}{c||}{$p$-Centrality} & \multicolumn{2}{c||}{Eigenvector} & \multicolumn{2}{c|}{Betweenness}    \\
			\multicolumn{2}{|c||}{Centrality} & \multicolumn{2}{c||}{Function} & \multicolumn{2}{c||}{Centrality} & \multicolumn{2}{c|}{Centrality}    \\\hline
			A & 0.485 & A & 0.485 & Y & 0.444 & A & 0.613 \\\hline
			H & 0.444 & H & 0.448 & A & 0.437 & B & 0.346 \\\hline
			Y & 0.432 & Y & 0.437 & H & 0.436 & H & 0.261 \\\hline
			B & 0.410 & B & 0.418 & I & 0.347 & D & 0.242 \\\hline
			I & 0.364 & I & 0.378 & K & 0.236 & Y & 0.229 \\\hline
			C & 0.356 & D & 0.361 & B & 0.224 & K & 0.156 \\\hline
			L & 0.348 & E & 0.353 & L & 0.211 & I & 0.128 \\\hline
			M & 0.340 & C & 0.351 & M & 0.175 & M & 0.086 \\\hline
			D, E, J & 0.333 & K & 0.348 & C & 0.165 & C & 0.083 \\\hline
			F & 0.314 & L & 0.347 & E & 0.143 & L & 0.061 \\\hline
			K & 0.308 & M & 0.336 & D & 0.141 & E & 0.046 \\\hline
			N & 0.291 & F & 0.311 & J & 0.136 & F, N & 0.021 \\\hline
			Q, R & 0.254 & J & 0.310 & N & 0.128 & J, P, Q, R & 0.0 \\\hline
			P & 0.239 & N & 0.298 & F & 0.096 &  &  \\\hline
			&  & Q, R & 0.235 & P & 0.073 &  &  \\\hline
			&  & P & 0.227 & Q, R & 0.044 &  &  \\\hline
		\end{tabular}
		
		\caption{\label{table:main_parametric1} The sorted lists of closeness centrality, $p$-centrality function, eigenvector centrality and betweenness centrality of the graph shown in Figure \ref{fig:main-graph}. Each value has been rounded to three decimal places. }
	\end{table}
	\begin{table}
		%\centering
		\begin{tabular}{|c|c|c|c|}\hline
			Node & Closeness & $p$-Centrality & Percentage of \\
			&  Centrality & Function & Difference \\\hline
			A & 0.485 & 0.485 & 0.014\% \\\hline
			H & 0.444 & 0.448 & 0.766\% \\\hline
			B & 0.410 & 0.418 & 1.898\% \\\hline
			Y & 0.432 & 0.437 & 0.941\% \\\hline
			I & 0.364 & 0.378 & 3.852\% \\\hline
			C & 0.356 & 0.351 & 1.149\% \\\hline
			L & 0.348 & 0.347 & 0.127\% \\\hline
			E & 0.333 & 0.353 & 5.886\% \\\hline
			D & 0.333 & 0.361 & 8.155\% \\\hline
			J & 0.333 & 0.310 & 7.041\% \\\hline
			M & 0.340 & 0.336 & 1.289\% \\\hline
			F & 0.314 & 0.311 & 0.759\% \\\hline
			K & 0.308 & 0.348 & 12.979\% \\\hline
			N & 0.291 & 0.298 & 2.594\% \\\hline
			Q & 0.254 & 0.235 & 7.381\% \\\hline
			R & 0.254 & 0.235 & 7.381\% \\\hline
			P & 0.239 & 0.227 & 4.739\% \\\hline
			\multicolumn{3}{|c|}{Average percentage of Differences} & \multicolumn{1}{c|}{3.938\%}   \\\hline
			
		\end{tabular}
		
		\caption{\label{table:main_parametric2} The value comparison of closeness centrality and $p$-centrality function of nodes in the graph shown in Figure \ref{fig:main-graph}. Each value has been rounded to three decimal places. The last column is computed using the formula $\frac{100|C - \Phi_p|}{C}$, where $C$ is the closeness centrality and $\Phi_p$ is $p$-centrality function. We rounded the result after applying the formula on raw values. This explains why node $A$ has equal closeness centrality and $p$-centrality in the table, but the percentage of the difference is slightly bigger than zero. }
	\end{table}
	
\end{example}
In the next example, we again compare certain $p$-centrality functions to closeness centrality on three famous graphs in network science and we obtain similar results as the above example.

\begin{examples} 
	\label{exam:three_classic}
	We use the Python library NetworkX, \cite{HSS}, for generating the following graphs and for computing the closeness centrality of their nodes. To compute $p$-centrality functions for all three cases, we use circles of radius at most 5. We refer the reader to \cite{HSS} for corresponding references for each graph.
	\begin{itemize}
		\item[(i)] For Zachary’s Karate Club graph, we choose $p=0.6$. Then by dividing the $p$-centrality function by the scaling factor $42$, we get an estimation of the closeness centrality function by an average error of about 2 percent.  
		\item[(ii)] For the graph of Florentine families, we set $p=0.6$ and let the scaling factor be $17.96$. Then, the average differences between $p$-centrality function and closeness centrality is about 3.416 percent.
		\item[(iii)] For the graph of co-occurrence of characters in the novel Les Miserables, we set $p=0.565$ and let the scaling factor be $86.8$. In this case, the average differences between $p$-centrality function and closeness centrality is about 5.46 percent.
	\end{itemize}
\end{examples}

A word of caution is in order regarding the above examples. Our primary goal is not to prove that $p$-centrality function is similar to (or approximates) other centrality measures. These examples only suggest that $p$-centrality functions and more generally parametric centrality functions can equally be considered as certain measures to evaluate the importance of nodes in a graph. Furthermore, since we compute parametric centrality functions for each node based on its local neighborhood, the computation of these functions are more practical and computationally cheaper than other centrality measures, which are computed based on the information of the whole graph. Therefore, in certain applications in network science, they may offer cheaper and more flexible substitutes for well-known centrality measures. 	

In certain problems in network science, the factor $p$ can be the probability of some interaction (transmission of information or diseases, etc) between nodes in a network. In these cases, one might need to define $\lambda_k = p^k$ in lieu of $p^{k - 1}$ for $k>0$ and the appropriate parametric centrality function would be $\sum_{k=1}^\infty p^k s_k$. The bottom line is our definition has enough flexibility to be useful in many situations occurring in real world problems in network science. 
  
Counting the number of nodes in each circle (layer of the BFS) around graph nodes reveals only a small portion of the local structure of graphs. To extract more refined local information of each node, we continue with constructing a vector representation of nodes in the next section.

\quad
%%%%%%%%%%%%%%%%%%%%%     Section     %%%%%%%%%%%%%%%%%%%%%
\section{The Neighbors Degree Frequency (NDF) Vector Representation of Nodes}
\label{sec:ndf_vec}
In this section, we construct the simple version of our graph embedding and discuss its interpretations, usages and shortcomings. This leads us to more mature versions of our construction in upcoming sections.
\begin{definition}
	\label{def:ndf_vanilla}
	Let $G = (V, E)$ be an undirected graph and let $m$ be the maximum degree of nodes in $G$. For every node $v \in V$, we define a vector $vndf(v) = (vndf(v)_1,\cdots,vndf(v)_m)\in\mathbb{R}^m$ by setting $vndf(v)_i$ to be the number of neighbors of $v$ of degree $i$ for all $i = 1, \cdots, m$. We call the function $vndf:V\rightarrow \mathbb{R}^m$ the {\bf vanilla neighbors degree frequency embedding} of nodes of $G$, and briefly {\bf vanilla NDF}. 
\end{definition}
  
The main point of our definition is to count the frequency of various degrees among neighbors of a node, hence the name {\bf Neighbors Degree Frequency} and the acronym {\bf NDF}. We basically decompose the degree of a node into a vector consisting of degree frequencies of its neighbors, as one computes $deg(v) = s_1(v)=\sum_{i = 1}^{m} vndf(v)_i$. Using $vndf(v)$, one can also find an upper bound for the size of the circle of radius 2 around $v$, i.e. $s_2(v)$, see Proposition \ref{prop:local-tree-condition}.

The vanilla version of neighbors degree frequency embedding has three obvious shortcomings: First it is wasteful, because there might exist numbers between 1 and $m - 1$ that never occur as a degree of a node in $G$. For example, consider a star shape graph with 10 nodes. It has 9 nodes with degree 1 that are adjacent to the central node of degree 9. The vanilla NDF requires 9 euclidean dimensions, but only the first and the last entries are used in the image of the mapping $vndf$. To remedy this issue, we propose a minimal version of NDF embedding in Definition \ref{def:ndf_minimal} below.

The second problem occurs when networks grow. Since the nodes with higher degrees usually attract most of the new links (for instance, because of preferential attachment process), there is a good chance that the maximum degree $m$ of graph nodes increases in growing graphs, and so a fixed $m$ wouldn't be sufficient for the above definition any more. This issue on small dynamic graphs has an easy solution: One can count all nodes of degree greater or equal to $m$ as degree $m$. 

The third problem is the {\it curse of dimensionality} for graphs having nodes with very high degrees. For instance, graphs appearing in social networks may contain nodes with millions of links. Clearly, a vector representation in a euclidean space of several million dimensions is not a practical embedding!. For the case of big dynamic graphs, we propose a dynamic version of NDF in Section \ref{sec:dynamic}. Our solution for big dynamic graph is perfectly applicable to the above cases too. However, since the vanilla and minimal NDF vector representations have several conceptual and theoretical importance, we devote a few more paragraphs to them.

\begin{definition}
	\label{def:ndf_minimal}
	Let $G=(V,E)$ be an undirected graph. Let $\{d_1,\cdots, d_m\}$ be the ascending list of all occurring degrees of nodes in $G$, which we call it the {\bf degrees list} of the graph $G$. The {\bf minimal neighbors degree frequency embedding} of nodes of $G$ is defined as
	\[
	mndf(v) = (mndf(v)_1,\cdots,mndf(v)_m)\in\mathbb{R}^m, \qquad \forall v \in V,
	\]  
	where $mndf(v)_i$ is the number of neighbors of $v$ of degree $d_i$, for all $i=1,\cdots, m$.
\end{definition} 
The minimal NDF is essentially the same as vanilla NDF except that we have omitted never occurring degrees while recording the frequency of degrees of neighbors of each node. The important point we need to emphasize here is that we substitute the role played by the ``maximum degree'' of graph nodes with ``degrees list'' of the graph. This paves the way for more variations of NDF by upgrading the concept of ``degrees list'' into ``intervals list''. This is our main pivot to define more practical versions of NDF for huge dynamic graphs in Section \ref{sec:dynamic}. In the following example, we examine some aspects of vanilla (and minimal) NDF:
	
\begin{example}
	\label{exam:mndf}
	Let $G=(V, E)$ be the graph shown in Figure \ref{fig:main-graph}. The ascending list of degrees in $G$ is $\{1,2,3,4,5\} $. Hence the minimal NDF and the vanilla NDF are the same and they map the nodes of $G$ into $\mathbb{R}^5$. The NDF vector representations of nodes of $G$ are shown in Table \ref{table:g_mndf}. One immediately notices the followings:
	\begin{itemize}
		\item[(i)] Nodes $B, E, I$ are all of degree 3, but they have very different NDF vector representations. This shows how NDF is much more detailed criterion than degree centrality in classifying nodes. 
		\item[(ii)] Nodes $Y$ and $H$ have the same NDF vector representations, because their neighbors have similar degree frequency. But the circles of radius 2 around them do not have the same size. In fact, one easily checks that $s_2(Y)=5$ and $s_2(H)=6$. This hints that in order to define a more faithful representation of nodes, we cannot content ourselves to just NDF vectors. Using the sequence $\{s_k\}_{k\geq 2}$ is one way to gain more in depth knowledge of the neighborhood topology. We shall also propose two extensions of NDF vectors to circles of higher radius around a node to extract more information about the bigger neighborhoods of nodes in Section \ref{sec:ndf_mat}. 
		\item[(iii)] Nodes $Q$ and $R$ cannot be distinguished from each other by all these measures. They essentially have the same statistics of degree frequency and circles sizes at all levels. In this case, it is clear that the permutation which only switches these two nodes is an automorphism on $G$ (an isomorphism from $G$ into itself). Therefore it is expected that they have the same NDF vector representation, see also Example \ref{exam:main_graph_auto}.
	\end{itemize}
		
	\begin{table}
		\begin{tabular}{|c|c|}\hline
			Node(s) & NDF Vector \\\hline
			A & (1, 1, 1, 2, 0)  \\\hline
			D & (2, 0, 2, 0, 0)  \\\hline
			K & (1, 2, 1, 0, 0)  \\\hline
			Y, H & (0, 1, 1, 1, 1)  \\\hline
			B & (0, 0, 1, 1, 1)  \\\hline
			E & (0, 1, 1, 1, 0)  \\\hline
			I & (0, 0, 0, 3, 0)  \\\hline
			C & (0, 1, 0, 0, 1) \\\hline
			F & (0, 1, 1, 0, 0) \\\hline
			L & (0, 0, 0, 2, 0) \\\hline
			N, M & (0, 1, 0, 1, 0)  \\\hline
			J & (0, 0, 0, 0, 1) \\\hline
			P, Q, R & (0, 0, 0, 1, 0)  \\\hline
		\end{tabular}
		
		\caption{\label{table:g_mndf} The vanilla (and minimal) NDF embedding of nodes of graph $G$, shown in Figure \ref{fig:main-graph} and explained in Examples \ref{exam:mndf}, \ref{exam:tree_approx} and \ref{exam:main_graph_auto}. }
	\end{table}
\end{example}
 
%%%%%%%%%%%%%%%%%%%%%   SUBSECTION    %%%%%%%%%%%%%%%%%%%%%%
\subsection{Tree approximations of neighborhoods of nodes}
\label{subsec:tree-approx}\quad \\

One might interpret vanilla and minimal NDF vector representations of nodes as an extension of degree of nodes into an embedding of nodes into vector spaces. However there is a more explanatory picture of NDF which is the subject of this subsection. Here we explain how NDF gives rise to an approximation of local neighborhoods of nodes by means of trees. The following proposition with minor modifications is valid for the minimal NDF too. For the sake of simplicity, we state it for the vanilla NDF only. First, we need some notations.

Let $G=(V, E)$ be an undirected graph. We recall that the {\it ego subgraph} of a graph $G$ centered at $v\in V$ of radius $r$, which we denote it by $E_r(v)$, is the induced graph whose set of nodes equals $D_r(v)$, the disk of radius $r$ around $v$, and whose set of edges consists of those edges of $G$ that both ends belong to $D_r(v)$, see \cite{HSS}. 

\begin{proposition}
	\label{prop:local-tree-condition} With the notation of Definition \ref{def:ndf_vanilla}, we define a function $\hat{s}:V\rightarrow \mathbb{N}$ by setting $\hat{s}(v) := \sum_{i = 1}^{m} (i-1)vndf(v)_i$, i.e. $\hat{s}(v)$ is defined as the dot product of $vndf(v)$ and the vector $(0, 1, \cdots, m-1)$. Then, for every $v\in V$, we have the followings:
	\begin{itemize}
		\item [(i)] $s_2(v) \leq \hat{s}(v)$, for all $v \in V$. 
		\item [(ii)] If $E_2(v)$ is a tree, then $s_2(v) = \hat{s}(v)$.
		\item [(iii)] Let $E_2^\prime(v)$ be the graph obtained from $E_2(v)$ by omitting the edges whose both ends lie in $C_2(v)$. Then $s_2(v) = \hat{s}(v)$ if and only if $E_2^\prime(v)$ is a tree.
	\end{itemize}
	
\end{proposition}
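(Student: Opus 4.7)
The plan is to reduce the entire proposition to a \textbf{double-counting} argument on the edges incident to $C_1(v)$. The first step is to rewrite the defining sum as a sum over the neighbors of $v$:
\[
\hat{s}(v) \;=\; \sum_{i=1}^{m}(i-1)\,vndf(v)_i \;=\; \sum_{u \in C_1(v)}(\deg(u)-1).
\]
For each $u \in C_1(v)$, the quantity $\deg(u)-1$ counts the edges from $u$ to nodes other than $v$, and every such edge lands in $C_1(v) \cup C_2(v)$, since a neighbor of a neighbor of $v$ is at distance at most $2$ from $v$ and cannot equal $v$ itself. Writing $e_{11}$ for the number of edges of $G$ with both endpoints in $C_1(v)$ and $e_{12}$ for the number of edges with one endpoint in $C_1(v)$ and the other in $C_2(v)$, a standard handshake identity then gives
\[
\hat{s}(v) \;=\; 2 e_{11} + e_{12}.
\]

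Given this, part (i) is almost immediate: every $w \in C_2(v)$ has at least one neighbor in $C_1(v)$ (by the definition of distance $2$), so $s_2(v) \leq e_{12} \leq 2e_{11}+e_{12} = \hat{s}(v)$. Moreover, equality in (i) holds if and only if \emph{both} intermediate inequalities are tight, namely $e_{11}=0$ and each $w \in C_2(v)$ has exactly one neighbor in $C_1(v)$.

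For part (iii) I would first observe that $E_2'(v)$ is always connected: every node of $C_2(v)$ has an edge retained in $E_2'(v)$ to some node of $C_1(v)$, and every node of $C_1(v)$ is adjacent to $v$. The graph $E_2'(v)$ has $1 + s_1(v) + s_2(v)$ vertices and exactly $s_1(v) + e_{11} + e_{12}$ edges (the spokes from $v$, the intra-$C_1(v)$ edges, and the cross edges to $C_2(v)$). Since a connected graph is a tree if and only if its number of edges equals its number of vertices minus one, $E_2'(v)$ is a tree if and only if $e_{11}+e_{12} = s_2(v)$. Combined with the inequality chain $s_2(v) \leq e_{12} \leq e_{11}+e_{12}$ from the proof of (i), this forces $e_{11}=0$ and $e_{12}=s_2(v)$, which is precisely the equality condition recorded above. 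Hence $s_2(v) = \hat{s}(v)$ if and only if $E_2'(v)$ is a tree.

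Finally, part (ii) is a quick consequence of (iii): if $E_2(v)$ is a tree, then no edge can lie inside $C_2(v)$, because any such edge together with the length-$2$ paths from its two endpoints back to $v$ would close a cycle of length $3$ or $4$. Thus $E_2'(v) = E_2(v)$ is a tree, and (iii) yields $s_2(v) = \hat{s}(v)$. The only technical point to keep an eye on is the connectivity of $E_2'(v)$, which licenses the ``edges equal vertices minus one'' characterization of a tree in the proof of (iii); without it the backward direction of (iii) would fail. Everything else is just careful bookkeeping of which of the two inequalities $s_2(v)\leq e_{12}$ and $e_{12}\leq 2e_{11}+e_{12}$ becomes an equality under which combinatorial condition on the edges of $E_2(v)$.
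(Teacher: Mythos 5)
Your proof is correct, and it takes a genuinely different route from the paper's. The paper argues set-theoretically: writing $S(u)=C_1(u)\setminus\{v\}$ for $u\in C_1(v)$, it uses the inclusion $C_2(v)\subseteq\bigcup_{u\in C_1(v)}S(u)$ to get (i), and then characterizes the equality $s_2(v)=\hat{s}(v)$ by the two conditions $S(u)\subseteq C_2(v)$ for all $u\in C_1(v)$ and $S(u)\cap S(w)=\emptyset$ for $u\neq w$, asserting (rather tersely) that these are equivalent to $E_2^\prime(v)$ being a tree. You instead double-count edges, establishing $\hat{s}(v)=2e_{11}+e_{12}$, prove (i) via $s_2(v)\le e_{12}$, and settle (iii) by the vertex/edge count criterion for trees applied to the connected graph $E_2^\prime(v)$; your observation that $E_2^\prime(v)$ is always connected is exactly the point that makes the backward direction of (iii) rigorous, and in this respect your argument is more explicit than the paper's, whose final equivalence is left as ``clear from the above inclusion.'' Note that your two conditions ($e_{11}=0$ and each $w\in C_2(v)$ having exactly one neighbor in $C_1(v)$) are just the edge-counting translation of the paper's containment and disjointness conditions, so the combinatorial content matches. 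One trivial slip: in part (ii), an edge inside $C_2(v)$ together with two length-$2$ paths back to $v$ closes a cycle of length $3$ or $5$, not $3$ or $4$; the deduction is unaffected since all you need is the existence of a cycle, and in fact you could bypass the cycle argument entirely by noting that $E_2^\prime(v)$ is a connected subgraph of the tree $E_2(v)$, hence itself a tree.
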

\begin{proof}
	Given a node $u\in C_1(v)$, let $S(u)$ denote $C_1(u)-\{v\}$, the set of neighbors of $u$ besides $v$. Then we have 
	\[
	C_2(v) \subseteq \bigcup_{u \in C_1(v) } S(u) = \bigcup_{i=1}^{m}\left( \bigcup_{u \in C_1(v), deg(u)=i } S(u) \right).
	\] 
	Taking the cardinality of both sides of this inclusion proves (i). 
	
	Assertion (ii)  follows from (iii). To prove (iii), we only need to notice that the equality $s_2(v) = \hat{s}(v)$ is logically equivalent to the presence of the following two conditions: 
	\begin{eqnarray*}
		S(u)&\subseteq& C_2(v), \qquad \forall u\in C_1(v), \\ 
		S(u)\cap S(w) &=& \emptyset, \qquad \forall u\neq w \in C_1(v).
	\end{eqnarray*}
	And this equivalence is clear from the above inclusion.
\end{proof}
  
\begin{example}
	\label{exam:tree_approx}
	In this example, we apply the content of the above proposition to some selected nodes of graph $G$ in Figure \ref{fig:main-graph}. 
	\begin{itemize}
		\item [(i)] One computes $s_2(A) = 6$ and $\hat{s}(A) = 9$. This is because $E_2^\prime(A)$ is not a tree; first $S(Y)$ and $S(H)$ are not subsets of $C_2(A)$, secondly $S(Y)\cap S(H) = \{I\}$.
		\item [(ii)] We observe that $E_2(P)$ is a tree and we have $s_2(P) = \hat{s}(P) = 3$.
		\item [(iii)] We have $s_2(F) = \hat{s}(F) = 3$, but the ego subgraph $E_2(F)$ is not a tree, because of the triangle $\Delta BDE$. However, the graph $E_2^\prime(F)$ is a tree (after omitting the edge (B, D) from $E_2(F)$). 
	\end{itemize}
\end{example}

The insight gained in Proposition \ref{prop:local-tree-condition} leads us to the following concept of local approximation:
\begin{remark}
	\label{rem:tree-approx}
	For every node $v$ in a graph $G=(V, E)$, we define a rooted tree $T_v$ of height 2 as follows: We call the root of the tree $t_v$ and the set of its children is $\{t_u, u\in C_1(v)\}$. In other words the children of the root are indexed by neighbors of $v$. Next, for every neighbor $u$ of $v$, we associate $deg(u) -1$ children $t_{u}^1, \cdots, t_{u}^{deg(u) -1}$ to the node $t_u$. With this setting in place, the intuition behind NDF and Proposition \ref{prop:local-tree-condition} is that the vanilla (and minimal) NDF simplifies (or approximates) every local neighborhood of a node $v$ in a graph $G$ as the tree $T_v$. In fact, $s_2(v) = \hat{s}(v)$ if and only if there is an isomorphism $\varphi: T_v \rightarrow E_2^\prime(v)$ of graphs such that $\varphi(t_v)=v$.  
\end{remark}

As an example for the last statement of the above remark consider node $N$ in the graph $G$ in Figure \ref{fig:main-graph}. Then the mapping $\varphi: T_N \rightarrow E_2^\prime(N)$ defined as $\varphi(t_N)=N$, $\varphi(t_M)=M$, $\varphi(t_K)=K$, $\varphi(t_K^1)=L$, $\varphi(t_K^2)=I$, $\varphi(t_K^3)=P$ and $\varphi(t_M^1)=H$ is an isomorphism of graphs, so $s_2(N)=\hat{s}(N)$.

This approximation is similar to the role played by tangent spaces in geometry of smooth manifolds, because they locally approximate the neighborhood of points on a manifold by means of flat vector spaces. For example consider the tangent line at a point in a smooth curve lying in $\r^2$. At a small neighborhood of that point, the tangent line is a locally good approximation of the curve. Here by local approximation, we mean that there is an invertible function $f$ between an open neighborhood of the point on the curve and an open interval around zero in the tangent line such that both $f$ and $f^{-1}$ are smooth mappings. 

One notes that this analogy is not perfect. For instance, tangent spaces vary ``smoothly'' when points vary smoothly in the manifold. But there is no notion of smoothness in graphs (at least nothing that I am aware of), because graph nodes are essentially discrete creatures. Another important gap in this analogy is the notion of ``dimension'': In smooth (or even topological) manifolds, every point in a connected manifold enjoys a tangent space with the same dimension, which is the dimension of the manifold. Again, we do not have a notion of dimension in graphs or trees yet (to my knowledge). However this analogy helps us to think about higher order approximations of the neighborhood of a node in the graph. In Section \ref{sec:ndf_mat}, we expand the idea of counting the frequency of degrees to higher circles around each node.

%%%%%%%%%%%%%%%%%%%%%   SUBSECTION    %%%%%%%%%%%%%%%%%%%%%%
\subsection{NDF-equivalence as a requirement for graph isomorphisms}
\label{subsec:isom}\quad \\

Another application of NDF embeddings is in graph isomorphism testing problem. It starts with the observation that if $\theta:G\ra H$ is an isomorphism between two graphs $G$ and $H$, then for every node $v$ in $G$, we must have $vndf(v) = vndf(\theta(v))$. Therefore, vanilla NDF can serve as an easy test for two graphs not being isomorphic. In other words, a necessary condition for two undirected graphs to be isomorphic is that the vanilla NDF embeddings of them should have the same image with the same multiplicity for each element in the image. When studying the group of automorphisms of a graph, vanilla NDF can be used similarly to filter the set of potential permutations of nodes eligible to be an automorphism of the graph, see Part (ii) of Example \ref{exam:g1_g2_ndf} and Example \ref{exam:main_graph_auto}. In order to formulate these ideas in mathematical terms, we need to define two levels of equivalence relations; between nodes and between graphs.
\begin{definition}
	\label{def:ndf-equivalence}
	\begin{itemize}
		\item [(i)] Two nodes in an undirected graph $G=(V, E)$ are called {\bf NDF-equivalent} if their vanilla NDF vectors are the same. The partition of $V$ induced by this equivalence relation is called the {\bf NDF-partition}.
		\item [(ii)] Let $G_1=(V_1, E_1)$ and $G_2=(V_2, E_2)$ be two undirected graphs. The graphs $G_1$ and $G_2$ are called {\bf NDF-equivalent} if the following conditions hold:
		\begin{itemize}
			\item [(a)] The vanilla NDF embedding on them has the same target space, equivalently the maximum degrees of $G_1$ and $G_2$ are equal.
			\item [(b)] The images of the vanilla NDF embedding on $G_1$ and $G_2$ are equal, that is $vndf(G_1) = vndf(G_2)$.
			\item [(c)] For every vector $X\in vndf(G_1)=vndf(G_2)$, the inverse images of $X$ in $V_1$ and $V_2$ have the same number of nodes. That is
			\[
			|\{v\in V_1; vndf(v) = X \}| = |\{v\in V_2; vndf(v) = X \}|.
			\]
		\end{itemize}
	\end{itemize}
\end{definition}

These equivalence relations are defined with respect to vanilla NDF. Similar definitions are possible for other types of NDF embeddings including higher order NDF embeddings. When it is necessary to distinguish between them, we will add an adjective (for example ``vanilla' or ``minimal'', etc) to specify what kind of NDF was used to define the equivalence. It is straightforward to see that two isomorphic graphs are NDF-equivalent. But the converse is not true, even for trees, see Part (i) of the following example:

\begin{example}
	\label{exam:g1_g2_ndf}
	Let $G_1$ and $G_2$ be the following graphs:
	\[
	G_1: 
	\xymatrix{
		*+[o][F-]{1} \ar@{-}[r]&*+[o][F-]{2} \ar@{-}[r]&*+[o][F-]{3} \ar@{-}[r]&*+[o][F-]{4} \ar@{-}[r]&*+[o][F-]{5} \ar@{-}[r] \ar@{-}[d]&*+[o][F-]{6} \ar@{-}[r]&*+[o][F-]{7} \ar@{-}[r]&*+[o][F-]{8} \ar@{-}[r]& *+[o][F-]{9}\\
		& & & &*+[o][F-]{10}& & & &
	}
	\]  
	\[
	G_2: 
	\xymatrix{
		*+[o][F-]{1} \ar@{-}[r]&*+[o][F-]{2} \ar@{-}[r]&*+[o][F-]{3} \ar@{-}[r]&*+[o][F-]{4} \ar@{-}[r] \ar@{-}[d]&*+[o][F-]{5} \ar@{-}[r]&*+[o][F-]{6} \ar@{-}[r]&*+[o][F-]{7} \ar@{-}[r]&*+[o][F-]{8} \ar@{-}[r]& *+[o][F-]{9}\\
		& & &*+[o][F-]{10}& & & & &
	}
	\] 
	
	\begin{itemize}
		\item[(i)] Clearly, $G_1$ and $G_2$ are not isomorphic, but they are NDF-equivalent, see Table \ref{table:g1_g2_ndf}.
		\item[(ii)] If an automorphism of $G_1$ (resp. $G_2$) does not change nodes 1 and 9, it must be the trivial automorphism. On the other hand, since nodes 1 and 9 are NDF-equivalent in graph $G_1$ (resp. $G_2$), every non-trivial automorphism on $G_1$ (resp. $G_2$) should switch nodes 1 and 9. Therefore, with a little of analysis, one observes that there is only one non-trivial automorphism on $G_1$ given with the following permutation:
		\[
		\left(
		\begin{array}{cccccccccc}
			1 & 2 & 3 & 4 & 5 & 6 & 7 & 8 & 9 & 10 \\ 
			9 & 8 & 7 & 6 & 5 & 4 & 3 & 2 & 1 & 10
		\end{array}
		\right).
		\]
		However, since 4 has a unique NDF vector in $G_2$, it is mapped to itself by all automorphisms of $G_2$. Therefore if an automorphism wants to switches 1 and 9, it should map isomorphically the path $(1, 2, 3, 4)$ onto the path $(9, 8, 7, 6, 5, 4)$, which is impossible. Thus, there is no non-trivial automorphism on graph $G_2$. 
	\end{itemize}
	\begin{table}[h]
		%\centering
		\begin{tabular}{|c|c|c|}\hline
		Nodes in $G_1$	& Nodes in $G_2$ & The Associated NDF Vector   \\\hline
		1, 9 & 1, 9 & (0, 1, 0)  \\\hline
		2, 8 & 2, 8 & (1, 1, 0)  \\\hline
		3, 7 & 6, 7 & (0, 2, 0)  \\\hline
		4, 6 & 3, 5 & (0, 1, 1) \\\hline
		5 & 4 & (1, 2, 0) \\\hline
		10 & 10 & (0, 0, 1)\\\hline
		\end{tabular}

		\caption{\label{table:g1_g2_ndf} Comparing the NDF embeddings of graphs $G_1$ and $G_2$ of Example \ref{exam:g1_g2_ndf} }
	\end{table}
\end{example}

\begin{example}
	\label{exam:main_graph_auto}
	In this example, we apply the vanilla NDF of the graph $G$ shown in Figure \ref{fig:main-graph} to prove that the only non-trivial automorphism of $G$ is the permutation switching only nodes $Q$ and $R$. According to the vanilla NDF of nodes of $G$ listed in Table \ref{table:g_mndf}, we have three NDF-equivalence classes with more than one element. Regarding our discussion at the beginning of this subsection, these are the only chances for having a non-trivial automorphism on $G$. Again using NDF vectors, we try to eliminate most of these chances. Nodes $Y$ and $H$ both have only one neighbor of degree 2, i.e. respectively, $L$ and $M$. If an automorphism switches $Y$ and $H$, then it must switch $L$ and $M$ too, which is impossible because $vndf(L)\neq vndf(M)$. Similar arguments show that no automorphism can switch the pairs ($M$, $N$), ($P$, $Q$) and ($P$, $R$). Therefore, the only possibility for an automorphism on graph $G$ is the permutation on the set of nodes which switches nodes $Q$ and $R$. And one easily checks that this permutation is actually an automorphism on $G$.  
\end{example}
Color refinement or 1-dimensional Weisfeiler-Leman algorithm is an iterative algorithm to partition the set of nodes of undirected graphs and is a popular algorithm for isomorphism testing, see \cite{GKMS} for an introduction to the subject. To show the relationship between NDF-equivalence and color refinement, we only need to recall its definition. At the start of this algorithm, a single color is associated to all nodes. In each round (iteration) of color refinement, two nodes $u$ and $v$ with the same color are assigned different colors only if there is a color, say $c$, such that the number of neighbors of $u$ with color $c$ is different than the number of neighbors of $v$ with color $c$. So, after the first round of color refinement the set of nodes of the graph are partitioned according to their degrees, that is $u$ and $v$ have the same color if and only if they have the same degree. After the second round, two nodes $u$ and $v$ have the same color if and only if for every occurring degree, say $k$, the numbers of neighbors of $u$ of degree $k$ is equal to the numbers of neighbors of $v$ of degree $k$. This is exactly the NDF-partition. 

We shall come back to this subject in Subsection \ref{subsec:rndfc-equivalence}, where we show how RNDFC-equivalence (a version of higher order NDF-equivalence) performs better than color refinement in certain examples. 
	
%%%%%%%%%%%%%%%%%%%%%     Section     %%%%%%%%%%%%%%%%%%%%%

\section{NDF Embeddings for Dynamic Graphs}
\label{sec:dynamic}
	
To address the problems arising from the abundance of possible degrees in big graphs and the ever-changing nature of dynamic graphs appearing in real world applications, our main approach is to replace the degrees list appearing in the definition of the minimal NDF, see Definition \ref{def:ndf_minimal}, with a list of intervals and then compute the NDF vector representation of nodes with respect to the intervals in this list. To this work, we have to impose three necessary conditions:
\begin{itemize}
	\item[(1)] The union of intervals must cover all natural numbers. 
	\item[(2)] The intervals must be mutually disjoint, that is the intersection of every two different intervals must be empty. 
	\item[(3)] The list of starting numbers of intervals must be in ascending order. 
\end{itemize}

The first two conditions mean that the set of intervals must be a partition of the set of natural numbers. The third condition determines the order of components of the NDF vector. These conditions imply that the first interval must start with 1 and the last interval has to be infinite (with no end). Furthermore, since we intend to use the NDF vectors as features in machine learning algorithms, we impose a rather non-mandatory condition too:
\begin{itemize}
	\item[(4)] The last interval must include the maximum degree of nodes in the graph. 
\end{itemize}
Condition (4) guarantees that the last entry of the NDF vector of (at least) one node is non-zero. Otherwise, the last entry would be redundant. One may wish to impose a stronger version of Condition (4) to ensure that every interval in the intervals list contains some occurring degree, but this can be a requirement that is very hard to satisfy in general. We formalize requirements (1)-(4) in the following definition:
\begin{definition}
	\label{def:dynamic-ndf}
	Let $G=(V, E)$ be an undirected graph.
	\begin{itemize}
		\item [(i)] An {\bf intervals list} $\mathcal{I}$ for graph $G$ can be realized by a finite sequence $\{n_1, n_2, \cdots, n_m\}$ of natural numbers such that $1 = n_1< n_2 < \cdots < n_m \leq d$, where $d$ is the maximum degree of nodes in $G$. We call such a sequence the {\bf list of starting points} of the intervals list.  Then the intervals list $\mathcal{I}$ is the ordered set $\{I_1, I_2, \cdots, I_m\}$, where $I_i = \{n_i, n_i +1, \cdots, n_{i+1}-1\}$ for $i=1,\cdots,m - 1$ and $I_m = \{n_m, n_m +1, \cdots\}$. 
		\item [(ii)] To any given intervals list $\mathcal{I}$ for graph $G$ as above, we associate a vector representation $dndf: V \ra \mathbb{R}^m$ called the {\bf dynamic neighbors degree frequency embedding} of nodes of $G$ defined as $dndf(v) = (dndf(v)_1,\cdots, dndf(v)_m)$ where for $i=1,\cdots,m$, $dndf(v)_i$ is the number of neighbors of node $v$ whose degrees lie in $I_i$.
	\end{itemize}
\end{definition}

What we are referring to as an ``interval'' here can also be called a ``range''. However we prefer the word interval because each interval is indeed a usual interval in real numbers restricted to natural numbers. Although the definition of a dynamic NDF embedding depends on the intervals list $\mathcal{I}$ employed in its construction, for the sake of simplicity, we omit the intervals list $\mathcal{I}$ from our notation. When it is important one may want to emphasis this dependency by denoting the dynamic NDF by $dndf_{\mathcal{I}}$. 

It is clear from the above definition that the dimension of vectors obtained in dynamic NDF equals the number of intervals in the chosen intervals list. In the following example, we explain that our definition of dynamic NDFs has enough flexibility to cover dynamic versions of vanilla and minimal NDF as well as degree centrality. 

\begin{example} In this example, let $G$ be an undirected graph.
	\label{exam:dynamic-v-m-ndf}
	\begin{itemize}
		\item[(i)]  Let $d$ denote the maximum degree of nodes in $G$. Then the intervals list $\mathcal{I}$ constructed from the list of starting points $\{1, 2, \cdots, d\}$ gives rise to the {\it dynamic version of vanilla NDF}. For future reference, we call $\mathcal{I}$ the {\it vanilla intervals list}.     
		\item[(ii)] As Definition \ref{def:ndf_minimal}, let $\{d_1,\cdots, d_m\}$ be the ascending list of all occurring degrees of nodes in $G$. If $d_1\neq 1$, we change it to 1. Then it is a list of starting points and defines an intervals list, which we call it the {\it minimal intervals list}. The dynamic NDF associated to this intervals list is the {\it dynamic version of minimal NDF}. 
		\item[(iii)] The singleton $\{1\}$ can be considered as a list of starting points. The intervals list associated to it consists of only one interval which is the whole natural numbers and the dynamic NDF defined on top of this intervals list is simply the assignment of degree to each node of the graph. Up to some normalizing factor, this one dimensional vector representation of nodes is the same as degree centrality on $G$. 
	\end{itemize}
\end{example} 

The above example demonstrates the relationship between dynamic NDF embeddings and previously defined NDF embeddings. In the following definition, we present two general methods to construct intervals lists. Based on the intuition gained from network science, we assume that most nodes of a graph have low degrees and substantially fewer nodes have high degrees. This suggests that we devote more intervals to lower degrees and fewer intervals to higher degrees. On the other hand, nodes with higher degrees appear to be the neighbors of much greater number of nodes, and so it is equally important to differentiate between high degree nodes too. We also note that the distance between higher degrees are much bigger than the distance between lower degrees, which suggests smaller intervals around lower degrees and bigger intervals around higher degrees. Based on these observations, we propose an ``adjustable design'' for partitioning the set of natural numbers into intervals lists. Therefore, we include some parameters in our construction to create more flexibility and control on the design of intervals lists. 

\begin{definition} Let $G$ be an undirected graph and let $d$ be the maximum degree of nodes in $G$. We set a maximum length $m$ for intervals.
	\label{exam:ratio-dynamic-ndf}
	\begin{itemize}
		\item[(i)] As the first approach, we try to partition the set of natural numbers into intervals of fix length $m$ (besides the last interval which has to be infinite!). Due to the possible remainder, we let the potentially smaller interval to be the first interval. Therefore our algorithm for choosing a ``uniform'' list of starting points of the intervals list would be as follows:\\
		\begin{algorithm2e}[H]
			\KwIn{
				Integers $d \geq m \geq 1$ and $d>1$\\
				\tcp{$d$: the maximum degree of nodes in $G$}
				\tcp{$m$: the maximum length of intervals} }
			\KwOut{\text{A list of starting points}}
			$next\_point \leftarrow d$\\
			\tcp{Assign an empty list to $starting\_points$}
			$starting\_points \leftarrow [\ ]$ \\
			
			\While{$next\_point > 0$}{
				$starting\_points${\bf .append($next\_point$)}\\
				$next\_point \leftarrow next\_point - m$\\
				\If{$next\_point \leq 1$}{
					$starting\_points${\bf .append($1$)}\\
					{\bf break}\\
				}
			}
			\tcp{ Reverse the order of $starting\_points$}
			$starting\_points${\bf .reverse()}\\
			\KwRet{$starting\_points$}
			\caption{\label{algo:uniform_starting_points} Generating a uniform list of starting points}
		\end{algorithm2e}   
		\item[(ii)] As the second approach, we try to generate an intervals list such that the lengths of the intervals are increasing. In Algorithm \ref{algo:increasing_starting_points}, we set a starting length $s$ and a maximum length $m$ for intervals. Then in each step, we increase the length of intervals by the ratio $r$ until we reach the maximum length.\\
		\begin{algorithm2e}[H]
			\KwIn{ Integers $d \geq m \geq s \geq 1$ and a real number $r > 1$\\
				\tcp{$d$: the maximum degree of nodes in $G$}
				\tcp{$m$: the maximum length of intervals}
				\tcp{$s$: the starting length of intervals}
				\tcp{$r$: the ratio of the length of intervals} }
			\KwOut{\text{A list of starting points}}
			$next\_length \leftarrow s$\\
			$starting\_points \leftarrow [1, s+1]$ \\
			\While{$starting\_points[-1] < d$}{
				$next\_length \leftarrow$ {\bf min}$(next\_length * r, m)$\\
				$next\_point \leftarrow$ {\bf int}$(starting\_points[-1] + next\_length)$\\
				$starting\_points${\bf .append($next\_point$)}\\
			}
			\If{$starting\_points[-1] > d$}{
				$starting\_points${\bf .pop($-1$)}\\
			}
			\KwRet{$starting\_points$}
			\caption{\label{algo:increasing_starting_points} Generating an increasing list of starting points}
		\end{algorithm2e}
	\end{itemize}
\end{definition}

One notes that if we set $s=m$ (the starting length being equal to maximum length) in Algorithm \ref{algo:increasing_starting_points}, we do not necessarily get the same intervals list as obtained in Algorithm \ref{algo:uniform_starting_points} with the same maximum length, even though there is no increase in the length of intervals in both cases. In this scenario, the intervals list obtained from Algorithm \ref{algo:increasing_starting_points} consists of intervals of fixed length $m$ except the last interval that contains the maximum degree of nodes and has to be infinite. However the intervals list obtained from Algorithm \ref{algo:uniform_starting_points} consists of intervals of length $m$ and the last interval which starts from the maximum degree of nodes and possibly the first interval that might have length less than $m$. 

\begin{example}
	\label{exam:main_graph_dndf}
	Let $G$ be the graph shown in Figure \ref{fig:main-graph}. We define two intervals lists as follows:
	\begin{eqnarray*}
		\mathcal{I}_1 &=& \{\{1, 2\}, \{3, 4, \cdots\} \} \\
		\mathcal{I}_2 &=& \{\{1\}, \{2, 3\}, \{4, 5, \cdots\} \} 
	\end{eqnarray*} 
	The intervals list $\mathcal{I}_1$ can be generated by Algorithm \ref{algo:increasing_starting_points} by setting $s=2$, $m$ any integer greater than or equal to $3$ and $r$ any real number greater than or equal to $1.5$. Similarly, $\mathcal{I}_2$ can be generated by Algorithm \ref{algo:increasing_starting_points} by setting $s=1$, $m=2$ and $r$ any real number greater than or equal to $2$. The dynamic NDF embeddings of nodes of $G$ with respect to $\mathcal{I}_1$ and $\mathcal{I}_2$ are shown in Table \ref{table:g_dndf}. We notice that by reducing the dimension of NDF embedding from 5 (for vanilla NDF) to 3 (for dynamic NDF w.r.t $\mathcal{I}_2$) or even to 2 (for dynamic NDF w.r.t $\mathcal{I}_1$), we do not lose a whole lot of precision. One can expect this phenomenon to occur more often and more notably in graphs whose maximum degrees are very large.
	
	\begin{table}
		\begin{tabular}{|c|c||c|c|}\hline
			Node(s) & Dynamic NDF w.r.t $\mathcal{I}_1$ & Node(s) & Dynamic NDF w.r.t $\mathcal{I}_2$\\\hline
			A & (2, 3)& A & (1, 2, 2)  \\\hline
			D & (2, 2)& D & (2, 2, 0)  \\\hline
			K & (3, 1)& K & (1, 3, 0)   \\\hline
			Y, H & (1, 3)& Y, H & (0, 2, 2)   \\\hline
			B, I & (0, 3)& B & (0, 1, 2)   \\\hline
			& & I & (0, 0, 3)   \\\hline
			E & (1, 2)& E & (0, 2, 1)  \\\hline
			C, F, M, N & (1, 1)& C, M, N & (0, 1, 1)  \\\hline
			& & F & (0, 2, 0)   \\\hline
			L & (0, 2) & L & (0, 0, 2)   \\\hline
			J, P, Q, R & (0, 1) & J, P, Q, R & (0, 0, 1)   \\\hline
		\end{tabular}
		
		\caption{\label{table:g_dndf} Two dynamic NDF embeddings of nodes of graph $G$, shown in Figure \ref{fig:main-graph} with respect to the intervals lists  $\mathcal{I}_1$, $\mathcal{I}_2$ explained in Example \ref{exam:main_graph_dndf}. }
	\end{table}
\end{example}

The above algorithms demonstrate only two possible ways of generating starting points for intervals lists, and consequently two ways of defining reasonable dynamic NDFs on graphs.

\begin{remark}
	\label{rem:costume-sp}
	The only feature of the graph used in these methods is the maximum degree of nodes in the graph and the rest of parameters are set by the user.  The best way of defining an intervals list for a specific graph depends on the distribution of degrees of nodes in natural numbers. Therefore, for a given graph with a particular structure, in order to define the most relevant dynamic NDF embedding, one has to incorporate more detailed information about the set of occurring degrees. There are also situations that we need to define a common intervals list for two or more graphs. In these situations, we choose an auxiliary point less than the maximum degree (call it the ``last point''). Then, we use one of the above algorithms to choose a list of starting points up to the last point. To cover the rest of degrees, we choose a complementary list of points according to the task in hand and the set of degrees of nodes. The concatenation of these two lists of starting points yields the final list of starting points.  For example, assume we have a graph whose set of occurring degrees up to 100 can be perfectly covered by Algorithm \ref{algo:increasing_starting_points} and it has only 5 nodes whose degrees are greater than 100, say $\{108, 111, 121, 148, 293\}$. We let the last point be 100 and choose the primary list $SP_1$ of starting points using Algorithm \ref{algo:increasing_starting_points}. We heuristically set the complementary list $SP_2$ of points to be $\{103, 116, 136, 201\}$. Then the concatenation $SP_1 + SP_2$ of these two lists is our final list of starting points. While choosing the complementary list $SP_2$, we didn't care about any particular growth rate for the lengths of intervals. Instead, We tried to impose two things: First, to avoid any empty interval, an interval which contains none of the degrees in the set $\{108, 111, 121, 148, 293\}$. Secondly, intervals should contain some margin around the degrees they contain, (e.g. the first interval contains some numbers less than 108 and some numbers greater than 111). We use this method several times in this article to build mixed lists of starting points, for instance see Examples \ref{exam:pagerank-random}, \ref{exam:wiki-vote-pagerank} and \ref{exam:epinions-pagerank}.
\end{remark}

We note that by choosing appropriate values for the parameters of the above algorithms, we can control the dimension of the vector representation obtained from dynamic NDF embeddings. For instance, two and three dimensional dynamic NDF embeddings of the graph shown in Figure \ref{fig:main-graph} are discussed in Example \ref{exam:main_graph_dndf}. To some extent, this resolves the ``curse of dimensionality'' explained after Definition \ref{def:ndf_vanilla}. But aggressively reducing the dimension of the embedding by this method imposes a coarse aggregation among nodes, because all neighbors of a node whose degrees lie in an interval are considered the same type. An alternative way of reducing the dimension of the embedding would be to apply a deep neural network algorithm equipped with some appropriate objective function on the vectors obtained from dynamic NDFs. Then the number of neurons in the last layer (before applying the objective function) determines the dimension of the final embedding. Since the objective function should be designed according to the downstream tasks, we postpone the detailed analysis of these types of the compositions of neural networks and NDF embeddings to future works. 

Dynamic NDF embeddings employ some sort of aggregation while counting the degree frequency of neighbors. Therefore, some precision has been lost and we cannot repeat the content of Subsection \ref{subsec:tree-approx} for dynamic NDFs. However, it is still true that the degree of a node equals the sum of all entries of its dynamic NDF vector, i.e. $deg(v) = s_1(v)=\sum_{i = 1}^{m} dndf(v)_i$ for all nodes $v$ in the graph. 

The content of Subsection \ref{subsec:isom} can be repeated for dynamic embeddings as well, although it is obvious that dynamic NDF embeddings have less expressive power than vanilla and minimal NDF embeddings, see Example \ref{exam:main_graph_dndf}. To some extent, the expressive power of dynamic NDF embeddings can be increased by considering higher order NDF embeddings, which is the subject of the next section.

\quad
%%%%%%%%%%%%%%%%%%%%%     Section     %%%%%%%%%%%%%%%%%%%%%

\section{Higher Order NDF Embeddings as Matrix Representations of Nodes}
\label{sec:ndf_mat}

So far for NDF embeddings, we have been considering the degree frequency of neighbors of a node and driving a vector representation from it. For higher order generalizations of NDF embeddings, we have two natural choices to drive vectors from circles around the node: 

In the first approach, we consider the previously defined NDF vector as the radius zero NDF. That is, for a node $v$, we computed neighbors degree frequency of elements of $C_0(v)=\{v\}$, the circle of radius zero centered at $v$. So naturally for $n > 0$, the radius $n$ NDF vector would be an aggregation of neighbors degree frequency of elements of $C_n(v)$. The aggregation method we adopt here is simply the mean of NDF vectors of elements of $C_n(v)$, but one can try other aggregation methods, depending on downstream tasks. More advanced option would be to learn the aggregation method, perhaps, similar to what has been done in \cite{HYL}, but in the level of each circle. Computing mean amounts to dividing the sum of NDF vectors of elements of $C_n(v)$ by the size of $C_n(v)$, i.e. $s_n(v)$. Dividing by $s_n(v)$ is necessary, because it makes the set of NDF vectors associated to different radiuses more homogeneous and comparable to each other. Whenever $C_n(v)$ is empty for some $n$, the radius $n$ NDF vector is simply defined to be the zero vector.

In the second approach, we consider the previously defined NDF vector as the radius one degree frequency (DF). The justification is that the NDF vector of a node $v$ is computed by counting the ``degree frequency'' of elements of $C_1(v)$. Therefore for an arbitrary natural number $n$, the order $n$ generalization of the NDF vector could be the degree frequency of elements of $C_n(v)$. This method produces a sequence of vectors with sharply varying magnitude and thus not comparable to each other, for instance see Examples \ref{exam:main_graph_matrix_reps}(iii) below. In order to normalize the degree frequency vectors in different radiuses, we divide each nonzero DF vector of radius $n$ by $s_n(v)$. Again when $C_n(v)$ is empty (i.e. $s_n(v)=0$), the DF vector of radius $n$ is defined to be the zero vector. One checks that $s_n(v)$ is exactly the $\ell^1$-norm of the DF vector of radius $n$ centered at node $v$, so this method of normalization is called the {\bf $\ell^1$-normalization}. It converts every nonzero DF vector (including the previously defined NDF vector) into a probability vector. In this approach, one should also note that we use the phrase ``degree frequency'' (DF) in lieu of ``neighbors degree frequency'' (NDF), because the degree frequency of very elements of circles are computed, not the ones of their neighbors! 

Finally, these methods of extending NDF vector representations even without dividing by the size of circles (normalization) can be useful in certain topics, so we record both versions for later references. The formal definitions of these generalizations of NDF embeddings are as follows:

\begin{definition} Let $G=(V,E)$ be an undirected graph and let $\mathcal{I}=\{I_1,I_2,\cdots,I_m\}$ be an intervals list for $G$.
	\label{def:matrix-reps}
	\begin{itemize}
		\item [(i)] The {\bf order $r$ matrix of neighbors degree frequency of circles} ({\bf NDFC} matrix) with respect to the intervals list $\mathcal{I}$ associates an $(r+1) \times m$ matrix $ndfc_r(v)$ to each node $v\in V$ such that the $k$-th row  of $ndfc_r(v)$ is computed as follows:
		\begin{equation*}
			\frac{1}{s_{k-1}(v)} \sum_{u\in C_{k-1}(v)} dndf(u), \qquad \text{when } s_{k-1}(v)\neq 0,
		\end{equation*}
		and the zero vector of dimension $m$ when $s_{k-1}(v)=0$. 
		
		The order $r$ {\bf RNDFC} matrix, the {\bf raw} version of NDFC matrix, of a node $v$ is denoted by $rndfc_r(v)$ and is defined similarly except that the $k$-th row of $rndfc_r(v)$ is not divided by $s_{k-1}(v)$.
		\item [(ii)] The {\bf order $r$ matrix of circles degree frequency} ({\bf CDF} matrix) with respect to the intervals list $\mathcal{I}$ assigns an $r\times m$ matrix $cdf_r(v)$ to each node $v\in V$ such that when $s_i(v)\neq 0$, the $(i, j)$-th entry of $cdf_r(v)$ is the number of nodes in $C_i(v)$ whose degrees belong to $I_j$ divided by $s_i(v)$, that is
		\begin{equation*}
			cdf_r(v)_{ij} = \frac{|\{u\in C_i(v); deg(u)\in I_j\}|}{s_i(v)},\qquad \text{when } s_i(v)\neq 0.
		\end{equation*}
		When $s_i(v)=0$, the entire $i$-th row of $cdf_r(v)$ is defined to be the zero vector of dimension $m$. 
		
		The order $r$ {\bf RCDF} matrix, the {\bf raw} version of the CDF matrix, of a node $v$ is denoted by $rcdf_r(v)$ and is defined similarly except that the $(i, j)$-th entry of $rcdf_r(v)$ is not divided by $s_i(v)$.  
	\end{itemize}
\end{definition}

One notes that the first row of NDFC (respectively RNDFC and RCDF) matrix is the same as the dynamic NDF vector. However the first row of CDF matrix is the $\ell^1$-normalized version of the dynamic NDF vector, i.e. $dndf(v) / \|dndf(v)\|_1$ where $\|-\|_1$ is the $\ell^1$-norm. In the following examples, we highlight some of the features of these matrix representations of nodes of a graph:

\begin{examples} 
	\label{exam:main_graph_matrix_reps}
	Let $G$ be the graph in Figure \ref{fig:main-graph}. Let $\mathcal{I}_0$ be the vanilla intervals list of $G$ and let $\mathcal{I}_1$ be as defined in Example \ref{exam:main_graph_dndf}. In the following, we say a matrix representation separates two nodes if its values on these nodes are different. We round all float numbers up to three decimal digits.
	\begin{itemize}
		\item [(i)] The order 3 NDFC matrix w.r.t. $\mathcal{I}_0$ for nodes $H$ and $Y$ are as follows:
		\[
			ndfc_3(H) =  
			\left[
			\begin{array}{lllll}
				0 . & 1 . & 1 . & 1 . & 1 . \\
				0.25 & 0.75 & 0.5 & 1.75 & 0.25 \\
				0.167 & 0.667 & 0.333 & 0.667 & 0.5 \\
				0.5 & 0.5 & 1 . & 0.5 & 0 .
			\end{array}
			\right], \quad
			ndfc_3(Y) = 
			\left[
			\begin{array}{lllll}
				0 . & 1 . & 1 . & 1 . & 1 . \\
				0.25 & 0.5 & 0.5 & 2 . & 0.25 \\
				0.2 & 0.8 & 0.4 & 0.4 & 0.6 \\
				0.4 & 0.6 & 0.8 & 0.6 & 0 . 
			\end{array}
			\right]
		\]
		Although the vanilla NDF vectors of $H$ and $Y$ are the same, others rows of the NDFC matrix of these nodes are different. In fact, even the order 1 NDFC matrix representation of nodes of $G$ w.r.t. $\mathcal{I}_0$ can separate all nodes of $G$, except $R$ and $Q$ for which there is an automorphism of $G$ that flips them, so they must have identical neighborhood statistics. This emphasizes the fact that higher order NDF representations of nodes have more expressive power, namely they are better at revealing the differences between neighborhoods of the nodes of a graph. This is in line with our discussion in Subsection \ref{subsec:isom} and will be explained more in Subsection \ref{subsec:rndfc-equivalence}.
		\item [(ii)] The order 5 NDFC matrix w.r.t. $\mathcal{I}_1$ for nodes $B$ and $I$ are as follows::
		\[
		ndfc_5(B) =  
		\left[
		\begin{array}{ll}
			0 . & 3 . \\
			1.667 & 2.333 \\
			0.571 & 1.571 \\
			0.333 & 2 . \\
			2 . & 1 . \\
			0 . & 1 .
		\end{array}
		\right], \quad
		ndfc_5(I) = 
		\left[
		\begin{array}{ll}
			0 . & 3 . \\
			1.667 & 2.333 \\
			0.8 & 1.6 \\
			0.333 & 1.667 \\
			1.333 & 1.667 \\
			0 . & 1 .
		\end{array}
		\right]
		\]
		Although nodes $B$ are $I$ have different vanilla NDF vectors, the first two rows of their NDFC matrices w.r.t. $\mathcal{I}_1$ are the same. In fact, it take the order 2 NDFC matrix representation of nodes of $G$ w.r.t. $\mathcal{I}_1$ to separate all nodes of $G$, (of course, except $R$ and $Q$).
		\item [(iii)] The order 3 CDF and RCDF matrices w.r.t. $\mathcal{I}_1$ of nodes $C, F, M$ and $N$ are as follows:
		\[
		cdf_3(C) =  
		\left[
		\begin{array}{ll}
			0.5 & 0.5 \\
			0.2 & 0.8 \\
			0.5 & 0.5 
		\end{array}
		\right], 
		cdf_3(F) =  
		\left[
		\begin{array}{ll}
			0.5 & 0.5 \\
			0. & 1. \\
			0.6 & 0.4
		\end{array}
		\right], 
		cdf_3(M) =  
		\left[
		\begin{array}{ll}
			0.5 & 0.5 \\
			0. & 1. \\
			0.8 & 0.2 
		\end{array}
		\right], 
		cdf_3(N) =  
		\left[
		\begin{array}{ll}
			0.5 & 0.5 \\
			0.5 & 0.5 \\
			0. & 1.
		\end{array}
		\right]
		\]
		\[
		rcdf_3(C) =  
		\left[
		\begin{array}{ll}
			1 & 1 \\
			1 & 4  \\
			2 & 2 
		\end{array}
		\right], \quad
		rcdf_3(F) =  
		\left[
		\begin{array}{ll}
			1 & 1 \\
			0 & 3 \\
			3 & 2 
		\end{array}
		\right], \quad
		rcdf_3(M) =  
		\left[
		\begin{array}{ll}
			1 & 1 \\
			0 & 4 \\
			4 & 1 
		\end{array}
		\right], \quad
		rcdf_3(N) =  
		\left[
		\begin{array}{ll}
			1 & 1 \\
			2 & 2 \\
			0 & 2 
		\end{array}
		\right]
		\]
		One can easily check that the CDF matrix representation has to be of order at least 3 to be able to separate the nodes of $G$ (except $R$ and $Q$), while order 2 is enough for RCDF to do the same task.  This example clearly shows that RCDF can be more efficient than CDF in separating the nodes of $G$.
	\end{itemize}
\end{examples}

To use our matrix representations of nodes in our exemplary machine learning algorithms in next sections, we shall apply certain aggregations to turn them into 1-dimensional arrays. However the present 2-dimensional array structure of these representations offers certain secondary applications.

%%%%%%%%%%%%%%%%%%%%%     Subsection     %%%%%%%%%%%%%%%%%%%%%

\subsection{Shapes of neighborhoods of nodes}
\label{subsec:shape}
\qquad\\

The matrix representations of nodes as 2-dimensional arrays can be used to produce a simple visualization of the neighborhoods of nodes in a graph. For this purpose, we use only NDFC matrices because of several reasons: (1) In NDFC matrices, all rows are NDF vectors or their means (aggregations). So, there is a meaningful relationship between different rows. (2) The rows of RNDFC and RCDF matrices are not comparable to each other and heavily populated rows suppress less populated rows. So one cannot compare them by looking at the visualization. For example, consider $rcdf_3(C)$ in Part (iii) of Example \ref{exam:main_graph_matrix_reps}, there is no obvious and meaningful way to compare the second row with the first row. (3) The row-by-row $\ell^1$-normalization of CDF matrices might suppress some important values. For instance, consider $cdf_3(F)$ in Part (iii) of Example \ref{exam:main_graph_matrix_reps}. In the matrix $rcdf_3(F)$, we observe that there are two nodes in $C_3(F)$ whose degrees are greater than 2, while only one node in $C_1(F)$ has a degree greater than 2. However, in $cdf_3(F)$, the entry $(3,2)$ is less than the entry $(1,2)$, which is clearly counter-intuitive.

We slightly modify NDFC matrices to obtain a more extensive visualization. With the notation of Definition \ref{def:matrix-reps}, for every node $v$ in the graph, we define a vector of dimension $m$ whose all entries are zero except the $j$-th entry which is 1 and $j$ is the index such that the degree of $v$ belongs to $I_j$. We call this vector the {\bf degree vector} of $v$ with respect to the intervals list $\mathcal{I}$. We add this vector as the first row to the NDFC matrix and shift other rows below. In this way, we can view the degree of the node as the NDF vector of an imaginary circle of radius -1 around each node and take it into account while visualizing other NDF vectors of circles. Let us call the matrix representation obtained in this way the {\bf visual NDFC (or VNDFC) matrix representation} of nodes of the graph and denote the order $r$ VNDFC matrix representation of a node $v\in V$ by $vndfc_r(v)$.

\begin{example}
	\label{exam:main-visual}
	We choose 7 nodes $A$, $J$, $C$, $D$, $Q$, $M$, and $N$ of the graph $G$ in Figure \ref{fig:main-graph} and use the Python package matplotlib to draw the (gray) color map of the order 7 VNDFC matrix representations of these nodes w.r.t. the vanilla intervals list. The order 7 is chosen because the diameter of $G$ is 7 and all 9 rows of the $vndfc_7(Q)$ are non-zero. Figure \ref{fig:nodes-viz} illustrates the visualizations of these nodes. It is worth noting that since nodes $J$ and $Q$ are connected to the rest of graph via nodes $A$ and $D$, respectively, they have very similar visualizations to the ones of $A$ and $D$, respectively. One observes less similarity between the visualizations of nodes $A$ and $C$, because $A$ only connects $C$ to the left part of the graph. One also notices some similarity between visualizations of nodes $M$ and $N$, which is not surprising regarding the similarity of their neighborhoods in $G$.
	\begin{figure}
		\centering
		\includegraphics[width=450px]{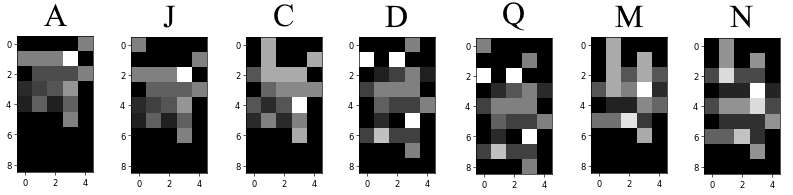}
		\caption{Visualizations of several selected nodes of the graph depicted in Figure \ref{fig:main-graph} as described in Example \ref{exam:main-visual}.} \label{fig:nodes-viz}
	\end{figure}
\end{example}

The representation of nodes by 2-dimensional arrays and the above visualizations suggest the idea that one can feed the higher order NDF matrices into convolutional neural network models and extract some patterns. However, unfortunately, there is an important reason that prevents two dimensional convolutions to be able to detect and reveal meaningful patterns in these matrices. The reason is that the rows and columns of these matrices have totally different natures, and so unlike images, the patterns in higher order NDF matrices are not equivariant under most two dimensional geometric transformations. That being said, inspired by the observations made in Example \ref{exam:main-visual}, there are some hopes for one dimensional convolutional operations to be useful in detecting certain patterns among these matrices. Specifically, we expect the vertical (column-wise) convolution operations be able to detect hidden patterns in higher order NDF matrices. Further evidences for this idea are provided by the success of applying certain column-wise aggregations to the RCDF matrix representations of nodes in Section \ref{sec:aggregations-closeness}. Example \ref{exam:fb-co-pagerank-conv} shows an instance of such a convolutional neural networks, though it has less success than the other neural network that consists of fully connected linear layers. 

%%%%%%%%%%%%%%%%%%%%%     Subsection     %%%%%%%%%%%%%%%%%%%%%

\subsection{RNDFC-equivalence vs color refinement}
\label{subsec:rndfc-equivalence}
\qquad\\

We have four options for higher order generalizations of NDF-equivalence; equivalences based on NDFC, RNDFC, CDF, and RCDF matrix representations of nodes. During normalization of rows in NDFC and CDF matrices, we loose some information, so raw versions of these matrix representations, i.e. RNDFC and RCDF, work better for separating the nodes of a graph. On the other hand, since RNDFC matrices are computed based on neighbors degree frequency, they are more similar to color refinement and heuristically better than RCDF matrices, which are computed based on degree frequency. Therefore we work only with vanilla RNDFC-equivalence of graphs and nodes here. The definition of {\bf RNDFC-equivalence} is the same as Definition \ref{def:ndf-equivalence}. Also, one checks that isomorphism of graphs implies RNDFC-equivalence. 

Here, we content ourselves with two examples in which RNDFC-equivalence can detect two non-isomorphic graphs while color refinement cannot and postpone the detailed study of RNDFC-equivalence as a tool for graph isomorphism testing to our future works.

\begin{example}
	\label{exam:2-regulars}
	Let $G_1$ be the cycle graph of length 6 and let $G_2$ be the disjoint union of two cycle graphs of length 3. They are both 2-regular graphs, and so color refinement does not change the colors of nodes. Therefore it cannot tell the difference between these graphs. On the other hand the diameter of $G_2$ is 1. This implies that only first two rows of RNDFC-matrices of its nodes are non-zero. However, the diameter of $G_1$ is 3 and so there are 4 non-zero rows in the order $n$ RNDFC matrices of nodes of $G_1$ for all $n\geq 3$. This clearly shows that these graph are not RNDFC-equivalent and so they are not isomorphic. 
\end{example} 

In the above example, we didn't have to even compute RNDFC matrices and our argument was based on the diameter of the graphs. This shows the advantage of RNCDF-equivalence versus color refinement. But the main superiority of RNCDF-equivalence over color refinement is based on the fact that it labels nodes by means of concrete matrices. Therefore in cases that the color refinement and RNCDF-equivalence both partition the set of nodes similarly, the matrices associated to each equivalence class might tell the difference between two graphs. 

\begin{figure}
	\centering
	\includegraphics[width=330px]{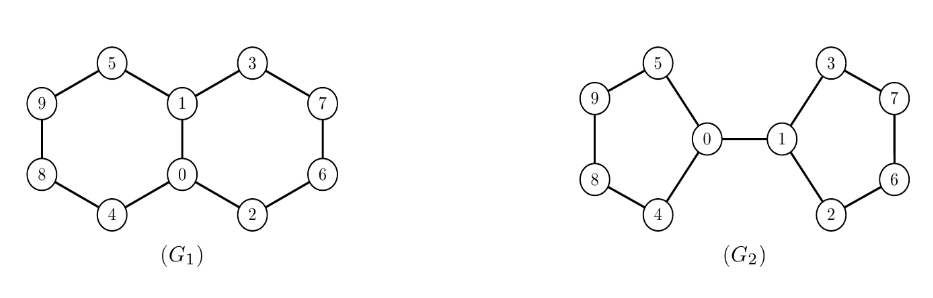}
	\caption{The graphs in Example \ref{exam:8-shape} } \label{fig:8-shape}
\end{figure}

\begin{example}
	\label{exam:8-shape}
	Let $G_1$ and $G_2$ be the graphs illustrated in Figure \ref{fig:8-shape}. Because of vertical and horizontal symmetries in both graphs, the color refinement partitions them similarly as follows:
	\[
	P_1 = \{0,1\}, \quad P_2 = \{2,3,4,5\}, \quad P_3 = \{6,7,8,9\}.
	\]
	So color refinement cannot tell they are not isomorphic. Again because of those symmetries, the RNDFC-partitions of these graphs are the same too. However, the RNDFC matrices associated to classes $P_2$ and $P_3$ are different in each graph. To see this, we list the order 5 RNDFC matrices (w.r.t. vanilla intervals list) of each RNDFC-equivalence class in graphs $G_1$ and $G_2$:
	\begin{itemize} 
		\item [($G_1$)] 
		\[
		rndfc_5(P_1) =  
		\left[
		\begin{array}{lll}
			0 & 2 & 1 \\
			0 & 4 & 3 \\
			0 & 6 & 2 \\
			0 & 4 & 0 \\
			0 & 0 & 0 \\
			0 & 0 & 0  
		\end{array}
		\right], \quad
		rndfc_5(P_2) =  
		\left[
		\begin{array}{lll}
			0 & 1 & 1 \\
			0 & 4 & 1 \\
			0 & 5 & 2 \\
			0 & 4 & 2 \\
			0 & 2 & 0 \\
			0 & 0 & 0 
		\end{array}
		\right], \quad
		rndfc_5(P_3) =  
		\left[
		\begin{array}{lll}
			0 & 2 & 0 \\
			0 & 3 & 1 \\
			0 & 3 & 2 \\
			0 & 3 & 2 \\
			0 & 3 & 1 \\
			0 & 2 & 0 
		\end{array}
		\right].
		\]
		\item [($G_2$)] 
		\[
		rndfc_5(P_1) =  
		\left[
		\begin{array}{lll}
			0 & 2 & 1 \\
			0 & 4 & 3 \\
			0 & 6 & 2 \\
			0 & 4 & 0 \\
			0 & 0 & 0 \\
			0 & 0 & 0  
		\end{array}
		\right], \quad
		rndfc_5(P_2) =  
		\left[
		\begin{array}{lll}
			0 & 1 & 1 \\
			0 & 4 & 1 \\
			0 & 5 & 2 \\
			0 & 2 & 2 \\
			0 & 4 & 0 \\
			0 & 0 & 0 
		\end{array}
		\right], \quad
		rndfc_5(P_3) =  
		\left[
		\begin{array}{lll}
			0 & 2 & 0 \\
			0 & 3 & 1 \\
			0 & 3 & 2 \\
			0 & 2 & 1 \\
			0 & 2 & 2 \\
			0 & 4 & 0 
		\end{array}
		\right].
		\] 
		
	\end{itemize}
	Therefore, $G_1$ and $G_2$ are not RNDFC-equivalent. Consequently they cannot be isomorphic. In fact, order 3 RNDFC matrices were enough to prove this. 
\end{example}
%%%%%%%%%%%%%%%%%%%%     Section     %%%%%%%%%%%%%%%%%%%%%

\section{Learning PageRank}
\label{sec:pagerank}

In this section, we present several simple examples to show how the higher order NDF matrix representations can be fed as feature vectors into simple feedforward neural networks consisting of only fully connected linear layers to learn PageRanks of graphs. We run these experiments/examples on two undirected graphs appearing in social networks: The first graph is the network of facebook pages of 14113 companies. The second graph is the network of 27917 facebook pages of media. The datasets of these graphs are available in \cite{RA}. They are big enough to contain a sufficient number of nodes for training deep learning models and small enough to make it possible to test each sensible scenario in less than an hour on an ordinary PC. Along the way, we also examine various options for feature engineering and the architecture of the models. After training our model on the first graph, we modify the graph slightly and observe that the trained model still has a good predictive power on the modified version of the graph. This proves that our method is capable of handling dynamic graphs. In order to show that our method of learning PageRank is inductive, we next consider two random graphs generated by the Barabási–Albert algorithm with different seeds. We train our model on one of the graphs and show it is able to predict the PageRank of the other graph as well. 

\begin{example}
	\label{exam:fb-co-pagerank}
	Let $G=(V, E)$ be the graph of facebook pages of companies. The nodes represent the pages and edges represent the mutual likes. It has 14113 nodes and average degrees of nodes is 7. One finds more specifics in the webpage of the graph \cite{RA}. To produce an intervals list, we use algorithm \ref{algo:increasing_starting_points} with the parameters $s=1$, $m=50$, $d=\max_{v\in V}deg(v)$ and $r=1.3$. For the higher order NDF matrix representation, we choose the order 5 NDFC matrices associated to this intervals list. The result is 14113 matrices of size $6\times 17$ indexed by nodes of $G$. We flatten these matrices as vectors with 102 entries and the result is the set of feature vectors for training/testing of our model. For the target variable, we compute the PageRanks of nodes using the Python package NetworkX, \cite{HSS}. In order to establish a rough balance between target variables and feature vectors, we multiply all PageRanks of nodes by 10000. The dataset is shuffled and divided into train/test sets, 10000 nodes for training and the rest of 4113 nodes for evaluating the model success. The following feedforward neural network model is used to learn the PageRank of the graph:
	\begin{lstlisting}[caption={\label{listing:pagerank-fb-co} The feedforward neural network model, written in PyTorch, used for learning PageRank of the network of facebook pages of companies.}]
		class FFNN_model(nn.Module):
		def __init__(self):
		super().__init__()
		num_features = features.shape[1]  # in this example; 102
		self.fc1 = nn.Linear(num_features, 400)
		self.fc2 = nn.Linear(400, 800)
		self.fc3 = nn.Linear(800, 200)
		self.fc4 = nn.Linear(200, 64)
		self.fc5 = nn.Linear(64, 8)
		self.fc6 = nn.Linear(8, 1)
		self.dropout1 = nn.Dropout(0.4)
		self.dropout2 = nn.Dropout(0.3)
		self.dropout3 = nn.Dropout(0.5)
		
		def forward(self, X):
		X = torch.tanh(self.fc1(X))
		X = torch.relu(self.fc2(X))
		X = self.dropout1(X)
		X = torch.relu(self.fc3(X))
		X = self.dropout3(X)
		X = torch.relu(self.fc4(X))
		X = self.dropout2(X)
		X = torch.tanh(self.fc5(X))
		return self.fc6(X)
	\end{lstlisting}
	Our objective function is the Mean Square Error function and we apply Adam optimizer with the learning rate 0.001. In each epoch, the training dataset is shuffled again and divided into 25 batches of size 400, so 25 rounds of gradient decent is performed in each epoch. After training this model for 2000 epochs, our model is able to predict the PageRanks of nodes in the test set with around 90\% accuracy (more precisely; 9.651\% inaccuracy on average). One should note that due to some random factors in our algorithm, like shuffling the data and parameter initialization, retraining the model may result slightly different accuracy rate. After saving this model, we randomly remove 500 edges from $G$ and add 500 new edges in such a way the modified graph stays connected. We apply the same feature engineering on the modified graph and compute the PageRanks of its nodes using NetworkX as targets. Our saved model is able to predict the PageRanks of the nodes of the modified graph with an average inaccuracy rate of 9.783\%. 
\end{example}

We have also run an experiment to train a neural network model equipped with convolutional layers to learn the PageRanks of nodes. Although the final accuracy rate is worse than the above example, we present this experiment here as an example for typical convolution neural networks applied on higher NDF matrices, as explained at the end of Subsection \ref{subsec:shape}. 

\begin{example}
	\label{exam:fb-co-pagerank-conv}
	Every little detail of this experiment is similar to Example \ref{exam:fb-co-pagerank}, except that we do not reshape the matrices as vectors and the model starts with two convolution layers and a pooling is applied after the first convolution layer. Afterwards we flatten the data to continue with the fully connected linear layers.  The size of the kernels of convolution layers are $3\times 1$ and $4\times 1$, respectively. The specifics of these layers are listed in Listing \ref{listing:pagerank-fb-co-conv}.	
	\begin{lstlisting}[caption={\label{listing:pagerank-fb-co-conv} The definition of convolution layers of the convolutional neural network model, written in PyTorch, used for learning PageRank of the network of facebook pages of companies.}]
		# Convolutional layers
		final_out_channels = 6
		self.conv_layer_1 = nn.Conv2d(in_channels=1, out_channels=3, kernel_size=(3,1), stride=(1, 1), padding=(1,0),  padding_mode='zeros')
		self.conv_layer_2 = nn.Conv2d(in_channels=3, out_channels=final_out_channels, kernel_size=(4,1), stride=(1, 1), padding=(1,0),  padding_mode='zeros')
		self.pooling_1 = nn.MaxPool2d(kernel_size=(3,1), stride=None, padding=0, dilation=1, return_indices=False, ceil_mode=False)
	\end{lstlisting}
	
	The trained model predicts the PageRank of nodes in the test set with around 64\% accuracy on average. 
\end{example}

In the following example, we discuss our experiments concerning deep learning models for PageRank of the network of facebook pages of media:

\begin{example}
	\label{exam:fb-media-pagerank-ndfc}
	Let $G=(V, E)$ be the graph of facebook pages of media. The nodes represent the pages and edges represent the mutual likes. It has 27917 nodes and average degrees of nodes is 14. One finds more specifics in the webpage of the graph \cite{RA}. To build an intervals list, we apply algorithm \ref{algo:increasing_starting_points} with the parameters $s=1$, $m=70$, $d=\max_{v\in V}deg(v)$ and $r=1.3$. For the higher order NDF, we take the order 5 NDFC matrices w.r.t. to this intervals list. The result is matrices of size $6\times 23$, and after flattening them the feature vectors have 138 entries. Again the PageRanks of nodes are the target variables. To balance the target variables with the feature vectors, this time, we multiply PageRanks by 24000. The dataset is shuffled and divided into 20000 nodes for training and 7917 nodes for evaluating the model. The feedforward neural network model is the same as the model described in Example \ref{exam:fb-co-pagerank}. Again we train the model for 2000 epochs and the average inaccuracy of the prediction on the test set is 12.438\%. Obviously, there are some random processes in our training algorithm such as shuffling and splitting the data, initialization of parameters, etc. To make sure that these elements of randomness do not undermine the promise of our methods, we repeat the training, and in the second try, the average inaccuracy on the test set is 13.551\%. We have also trained the same model with the order 6 CDF matrix representation in lieu of the order 5 NDFC matrix representation and the trained model had the average inaccuracy of 28.44\% on test set. This suggests that the NDFC matrix representation is a better choice than the CDF matrix representation for learning PageRank. 
\end{example}

In order to demonstrate that our method of learning PageRank based on the higher order NDF matrix representation is inductive, we run another experiment in the following example.

\begin{example}
	\label{exam:pagerank-random}
	We use Python package NetworkX to build two random graphs with 20000 nodes. The method we apply to generate these random graphs is based on the preferential attachment and is called the dual Barabási–Albert algorithm, see \cite{M, HSS} for details. It has the following parameters:
	\begin{itemize}
		\item $n$: The number of nodes. We set $n=20000$ for both graphs.
		\item $p$: A probability, i.e. $0<p<1$. We set $p=0.5$ for both graphs.
		\item $m_1$: The number of edges to link each new node to existing nodes with probability $p$. We set $m_1=3$ for both graphs.
		\item $m_2$: The number of edges to link each new node to existing nodes with probability $1-p$. We set $m_2=1$ for both graphs.
		\item $seed$: The seed indicating the random number generation state. We set $seed=1$ and $seed=2$ for the first graph and the second graph, respectively.
		\item $initial\_graph$: The initial graph to start the preferential attachment process. This parameter is set ``None'' for both graphs.
	\end{itemize}
	The graphs we obtain both have around 40000 edges with an average degree of around 4. However the list of degrees occurring in these graphs are very different especially for higher degrees, For instance the maximum degree of the first graph is 280, while it is 314 for the second graph. Therefore to choose a common list of starting points, we have to use the method explained in Remark \ref{rem:costume-sp}, and our choice for the list of starting points is as follows:
	\[
	\{1, 2, 3, 4, 5, 7, 9, 11, 14, 18, 23, 29, 36, 44, 60, 80, 100, 127, 150, 165, 205\}.
	\]
	We compute the order 5 NDFC matrix representations of both graphs w.r.t. this list of starting points and obtain a matrix of size $6 \times 21$ for each node of these graphs. We reshape these matrices as 1-dimensional arrays with 126 entries and consider them as the feature vectors. As for targets, we compute the PageRanks of nodes of these graphs. To balance the target variables with the feature vectors, we multiply PageRanks by 1000. Afterwards, we train the same feedforward neural network model described in Example \ref{exam:fb-co-pagerank} with 10000 randomly chosen feature vectors of the first graph as the feature set and the PageRank of the corresponding nodes as the target set. The average error of the trained model on the rest of 10000 nodes is 8.296\%. Then, we apply the trained model to the whole dataset of the second graph. The average error of predicting PageRanks of nodes of the second graph is 8.926\%. 
\end{example}

This example shows that the ``local'' feature vectors obtained from the order 5 NDFC matrix representation of the first graph contain enough statistical information about the ``global'' structure of the graph to be useful for learning and predicting not only the PageRanks of nodes in the first graph, but also the PageRanks of nodes in the second graph. In other words, the above example confirms that our method is a local-to-global strategy. It also clearly shows that our method can learn PageRank using the data of a known graph, and then predict the PageRank of another graph, as long as these graphs share some statistical similarities. Hence our method of learning PageRank is inductive. We shall return to the subject of leaning of PageRank in Section \ref{sec:digraphs}, after adopting a modification inspired by directed graphs.

%%%%%%%%%%%%%%%%%%%%%     Section     %%%%%%%%%%%%%%%%%%%%%

\section{Aggregations and learning the closeness centrality}
\label{sec:aggregations-closeness}

The operation of concatenating rows (resp. columns) is one of the most basic methods of aggregation on entries of a matrix which amounts to reshaping the matrix (resp. the transpose of the matrix) into a 1-dimensional array. Some of the other basic aggregation methods are minimum, maximum, summation, mean and median which can be performed column-wise, row-wise or on all entries of the matrix. However since the order of rows in our matrix representations comes from the Breadth-First Search (or equivalently, the radius of the circles around nodes), we have a natural order in each column. On the other hand, the order of columns comes from the natural increasing order of starting points of the underlying intervals list, so there is a meaningful order between entries of each row too. Exploiting these vertical and horizontal orderings gives us a pivot and more creative opportunities for feature engineering based on these matrix representations. The following definition which was inspired by our previous constructions in Section \ref{sec:parametric-centrality} is only an instance of new types of possible aggregations:    

\begin{definition}
	\label{def:parametric-agg}
	Let $G=(V, E)$ be a graph and let $\pi:V\ra M_{r\times m}$ be a matrix representation of nodes of $G$ as $r\times m$ matrices. For any given sequence of parameters $\Lambda=\{\lambda_i\}_{i=1}^{r}$ of real numbers, we define a vector representation $\pi_\Lambda:V\rightarrow \mathbb{R}^m$ as follows:
	\[
	\pi_\Lambda(v)_j = \sum_{i=1}^r \lambda_i \pi(v)_{i,j} \qquad \forall v\in V, \quad j=1,\dots,m.
	\]
	In other words, $\pi_\Lambda = M_\Lambda \pi$, where $M_\Lambda$ is the vector (or the row matrix) $\left[\lambda_1,\dots,\lambda_r\right]$. We call $\pi_\Lambda$ the {\bf parametric aggregation} of $\pi$ with respect to  $\Lambda$.
\end{definition}

More precisely, the above aggregation is a column-wise parametric aggregation. Obviously, we can repeat this definition for row-wise parametric aggregation which replaces each row by the weighted sum of its elements. Even more generally, one can consider an $r\times m$ matrix of parameters and perform a 2-dimensional aggregation of the given matrix representation or apply any other pooling technique to obtain a lower dimensional (matrix or vector) representation of nodes. Of course, they all depend on the prior knowledge about the structure of the graph and the downstream task in hand. Inspired by $p$-centrality function discussed in Section \ref{sec:parametric-centrality}, in the following definition, we introduce a special case of parametric aggregation for our matrix representations of nodes:

\begin{definition}
	\label{def:p-agg}
	Let $G=(V, E)$ and $\pi:V\ra M_{r\times m}$ be as Definition \ref{def:parametric-agg}. For a given real number $0<p<1$, the {\bf $p$-aggregation} of the matrix representation $\pi$ is the parametric aggregation of $\pi$ with respect to the parameter sequence $\Lambda=\{p^{i-1}\}_{i=1}^{r}$.  and it is denoted simply by $\pi_p$.
\end{definition}

Given a node $v$ in a graph, the sum of all entries of the $i$-th row of the RCDF matrix representation of $v$ is $s_i(v)$. So one can consider the rows of the RCDF matrix of $v$ as a decomposition of first few terms of the sequence $\{s_i(v)\}$ with respect to the underlying intervals list. Therefore the $p$-aggregation of the RCDF matrix is a decomposition of $p$-centrality function. Combining this observation with the intuition gained in Examples \ref{exam:main_parametric} and \ref{exam:three_classic} makes us to believe that the $p$-aggregation of the RCDF matrix contains a good deal of information about the closeness centrality of the graph. In the following examples, we feed the vectors obtained from the $p$-aggregation of the RCDF matrix representations into simple feedforward deep learning models to learn and predict closeness centrality of the graphs discussed in the previous section. 

\begin{example}
	\label{exam:fb-co-closeness}
	Let $G=(V,E)$ be the graph of facebook pages of companies discussed in Example \ref{exam:fb-co-pagerank}. 
	\begin{itemize}
		\item [(i)] Let $\mathcal{I}$ be the intervals list associated with the starting points generated by Algorithm \ref{algo:increasing_starting_points} according to the parameters: $s=1$, $m=35$, $d=\max_{v\in V}deg(v)$ and $r=1.5$, We set $p = 0.3$ and consider the $p$-aggregation of the order 4 RCDF matrix representation of nodes of $G$ w.r.t. the intervals list $\mathcal{I}$ as our feature set. We compute the closeness centrality of the nodes of the graph using NetworkX library, \cite{HSS}, and consider it as the target set. We employ a not-so-deep feedforward neural network model as follows:
		\begin{lstlisting}[caption={\label{listing:closeness-fb-co} The feedforward neural network model, written in PyTorch, used for learning closeness centrality of the network of facebook pages of companies.}]
			class FFNN_model(nn.Module):
			def __init__(self):
			super().__init__()
			num_features = features.shape[1]
			self.fc1 = nn.Linear(num_features, 64)
			self.fc2 = nn.Linear(64, 8)
			self.fc3 = nn.Linear(8, 1)
			self.dropout1 = nn.Dropout(0.3)
			
			def forward(self, X):
			X = torch.tanh(self.fc1(X))
			X = self.dropout1(X)
			X = torch.relu(self.fc2(X))
			return self.fc3(X)
		\end{lstlisting}
		We use the Mean Square Error function as the objective function and apply Adam optimizer. The nodes are shuffled and divided into train/test sets with 10000 nodes for training and 4113 nodes for testing. During each epoch, the training dataset is shuffled and divided into 25 batches of size 400, so 25 rounds of gradient decent is performed. After training this simple model for 2000 epochs, we obtain a model for predicting closeness centrality of nodes based on the aforementioned local features with an average error of 1.86\% on the test set.
		\item [(ii)] We delete 500 edges randomly from $G$ and add 500 new edges randomly in such a way that the modified graph remains connected. Then we compute the feature set and target set of the new graph exactly with the same method as Part (i). We apply the trained model in Part (i) to the feature set of all 14113 nodes of the modified graph. The average error is 2.195\%. A word of caution is necessary here. After removing and adding 500 edges to graph $G$, the list of starting points of the intervals list computed for graph $G$ was still a good choice for the modified graph. In practice, it is possible that lists of degrees occurring in these graphs are very different and one has to develop a method to choose a common list of starting point for them.
	\end{itemize}
\end{example}

The above example clearly shows the robustness of our method of learning closeness centrality by applying deep learning on $p$-aggregation of the RCDF matrix representations of nodes of the graph. The fact that we only used order 4 RCDF matrix representations of nodes (a locally engineered feature) to learn closeness centrality (a globally defined centrally measure) confirms the local-to-global nature of our method. Moreover, Part (ii) of the above example demonstrates that our method is applicable to dynamic graphs. In Example \ref{exam:clossness-random}, we demonstrate the inductive power of our method to learn closeness centrality. In the following example, we train a model to learn the closeness centrality of the network of facebook pages of media. We also test several alternative ideas for feature engineering. 

\begin{example}
	\label{exam:fb-media-closeness-rcdf-rndfc}
	\begin{itemize}
		\item [(i)] Let $G=(V, E)$ be the graph of facebook pages of media. We use Algorithm \ref{algo:increasing_starting_points} with parameters $s=1$, $m=70$, $d=\max_{v\in V}deg(v)$ and $r=1.3$ to construct an intervals list. For the higher order NDF, we take the order 3 RCDF and RNDFC matrices associated to this intervals list. For $p$-aggregation, we try all vales in the set $\{0.3, 0.2, 0.15\}$. The results of all $p$-aggregations are vectors with 23 entries. We compute the closeness centrality of nodes using NetworkX library for the target set. The dataset is shuffled and divided into 10000 nodes for training and 17917 nodes for evaluating the model. The feedforward neural network model is the same as the model described in Example \ref{exam:fb-co-closeness}. Again we train the model for 2000 epochs. For various options described in the above, the average inaccuracies of the prediction on the test set are as the Table \ref{table:fb-media-errors}. Although the order 3 RNDFC matrix matrices have one row more that the order 3 RCDF matrices, the model using RNDFC matrices performs worse than the model using RCDF matrices. This shows the relative advantage of RCDF matrices for learning closeness centrality. We also notice that the parameter $p$ in the $p$-aggregation is an important hyperparameter for our method. 
		\item [(ii)] To see whether a deeper model can help to lower the average inaccuracy, We consider the order 3 RCDF matrix representation with $p$-aggregation for $p=0.2$ and feed the feature vectors obtained into the model described in Example \ref{exam:fb-co-pagerank}. The average inaccuracy of the trained model is 2.089\%, which is surprisingly higher than the one for shallower model!
	\end{itemize}
\end{example}

\begin{table}
	\begin{tabular}{l|c|c|c}
		& $p=0.3$ & $p=0.2$ & $p=0.15$ \\\hline
		RCDF & 2.122\% & 1.755\% & 1.752\% \\\hline
		RNDFC & 4.398\% & 3.471\% & 2.404\% 
	\end{tabular}
	\caption{\label{table:fb-media-errors} The average inaccuracies of the model described in Example \ref{exam:fb-media-closeness-rcdf-rndfc} on the test set according to different options. Columns are indexed according to parameter $p$ of the aggregation and rows are marked by the matrix representation which was used.}
\end{table}

\begin{example}
	\label{exam:clossness-random} Consider two random graphs discussed in Example \ref{exam:pagerank-random} and the list of starting points we constructed there. We compute the order 2 RCDF matrix w.r.t. the intervals list associated to these starting points and obtain a $2 \times 21$ matrix representation for each node of these graphs. Next, we set $p=0.2$ and apply $p$-aggregation to this matrix representation. The final feature vectors have 21 euclidean dimension. We train the same feedforward neural network model described in Example \ref{exam:fb-co-closeness} with 10000 randomly chosen feature vectors of the first graph as the feature set and the closeness centrality of the corresponding nodes as the target set. The average error of the trained model on the rest of 10000 nodes is 1.42\%. Afterwards, we apply the trained model to the whole dataset of the second graph. The average error of predicting closeness centrality of nodes of the second graph is 2.053. This clearly shows that the ``local'' feature vectors obtained from $p$-aggregation of order 2 RCDF matrix representation of the first graph contains enough statistical information to be used to predict the closeness centrality of not only nodes of the first graph, but also the nodes of the second graph. 	
\end{example}

The above example confirms the inductive nature of our method. The convenience of learning closeness centrality using our methods offers a number of research opportunities on applications of closeness centrality measure in network science which will be developed in our future works. 

Finally, we note that replacing vanilla or minimal intervals lists with a dynamic intervals list induces a row-wise aggregation on the level of higher order NDF matrix representations. We leave the exact formulation of these type of aggregations and the study of their generalizations and possible applications to the interested reader.    

%%%%%%%%%%%%%%%%%%%%%     Section     %%%%%%%%%%%%%%%%%%%%%

\section{NDF for Directed Graphs}
\label{sec:digraphs}

All of our previous constructions are based on two notions in graphs; the degrees of nodes and the distance between nodes. There are two possibilities for each of these notions in directed graphs; ``inward'' and ``outward'' according to whether the directed edges are towards a node or starting from a node, respectively. Therefore, to extend various constructions of NDF embeddings and its higher order generalizations to directed graphs, we need to consider both of these variations. We do not bore the reader by repeating all aforementioned definitions and constructions according to inward/outward dichotomy and content ourselves with few instances. We begin with the crucial difficulties that arise while dealing with directed graphs. The directed graph shown in Figure \ref{fig:main-graph-directed} is used to demonstrate our reasoning, constructions and basic examples, so we briefly denote it by $G_0=(V_0, E_0)$ in this section.

\begin{figure}
	\centering
	\includegraphics[width=330px]{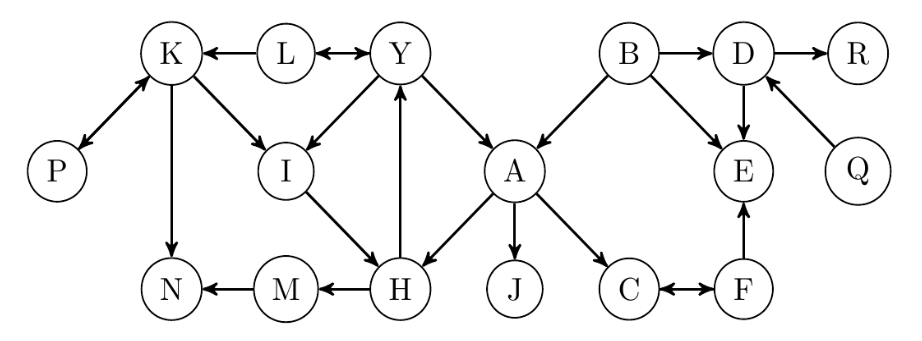}
	\caption{The directed graph illustrating the concepts and constructions of Section \ref{sec:digraphs}.} \label{fig:main-graph-directed}
\end{figure}

Let $G=(V,E)$ be a directed graph. The main issue is that edges in directed graphs are not symmetric. This has several major consequences that should be addressed: 
\begin{itemize}
	\item Given a node $v\in V$, a node $u\in V$ is called an {\bf in-neighbor} (resp. {\bf out-neighbor}) of $v$ if there is a directed edge $(u, v)$ (resp. $(v, u)$). The first consequence of asymmetric edges in directed graphs is that the number of in-neighbors of $v$ (also called {\bf in-degree} of $v$ and denoted by $deg^{in}(v)$) is not always equal to the number of out-neighbors of $v$ (also called {\bf out-degree} of $v$ and denoted by $deg^{out}(v)$). Therefore, all degree frequency calculations must be done accordingly in two parallel ways. We use the superscripts $^{in}$ and $^{out}$ for denoting which way (inward or outward) is intended. For example, the inward vanilla NDF (i.e. the vanilla in-neighbors in-degree frequency) vector of a node $v\in V$ is denoted by $vndf^{in}(v)$. 
	\item The second consequence is the possibility for a node to have in-neighbors (resp. out-neighbors) whose in-degrees (resp. out-degrees) are zero. For example in graph $G_0$, node $B$ is an in-neighbor of node $E$ whose in-degree is zero. This requires us to consider an entry for zero in NDF vectors. Therefore any (vanilla or dynamic) list of starting points should begin with zero instead of 1. When we consider a minimal intervals list for a static graph the list of starting points is allowed to start with a number bigger than 0, but for dynamic version of minimal intervals list, the list of starting point must begin with 0 as well. 
	\item Moreover, we have two maximum degrees in directed graphs; maximum in-degree and maximum out-degree, so construction of a dynamic NDF for a directed graph starts with choosing two lists of starting points accordingly. For graph $G_0$, both the (inward/outward) vanilla (and minimal) lists of starting points are equal to $\{0, 1, 2, 3\}$, the in-degrees list and the out-degrees list of $G_0$. The vanilla inward/outward NDF vectors of some selected nodes of $G_0$ are listed below:
	\begin{eqnarray*}
		vndf^{in}(A) &=& (1, 0, 1, 0), \qquad vndf^{out}(A) = (1, 1, 1, 0) \\
		vndf^{in}(H) &=& (0, 0, 2, 0), \qquad vndf^{out}(H) = (0, 1, 0, 1) \\
		vndf^{in}(B) &=& (0, 0, 0, 0), \qquad vndf^{out}(B) = (1, 0, 1, 1) \\
		vndf^{in}(K) &=& (0, 2, 0, 0), \qquad vndf^{out}(A) = (1, 2, 0, 0) 
	\end{eqnarray*}
	\item Finally, we note that the existence of a directed path $(v_0, \cdots, v_n)$ starting from a node $v_0\in V$  and ending to another node $v_n\in V$ does not guarantee that there is a directed path from $v_n$ to $v_0$. For example, there is a directed path $(B, A, C, F)$ from node $B$ to node $F$ in graph $G_0$, but not vice versa. Moreover, if there is such a path, there is no guarantee that the shortest path from $v_0$ to $v_n$ has the same length as the shortest path from $v_n$ to $v_0$. For example, the shortest path from $K$ to $Y$, i.e. $(K, I, H, Y)$, has length 3 while the one from $Y$ to  $K$, i.e. $(Y, L, K)$, has length 2. This means we have no well-defined symmetric distance (metric) function on the set of nodes of a directed graph. However, we can perform the inward/outward Breath-First Search and define in-circles as well as out-circles around each node. Specifically, the {\bf in-circle} of radius $n$ centered at a node $v\in V$ is defined and denoted as 
	\[
	C_n^{in} (v) = \{u\in V, \text{there is a shoetest directed path from } u \text{ to } v \text{ of length } n\}, 
	\] 
	for $n\geq 0$. The {\bf out-circle} of radius $n$ centered at $v$ is defined similarly and denoted by $C_n^{out} (v)$. For example in $G_0$, we have
	\[
	C_2^{in} (A) = \{L, H\},  \qquad C_2^{out} (A) = \{M, Y, F\}.
	\]
\end{itemize}

Having the notions of (inward/outward) circles in our arsenal, we are able to define all variations of the higher order NDF matrix representations of nodes in a directed graph as straightforward generalizations of these notions in undirected graphs. Therefore we are ready to apply our methods in previous sections to directed graphs as well. We are particularly interested in learning PageRank of directed graphs. 

%%%%%%%%%%%%%%%%%%%%%     Subsection     %%%%%%%%%%%%%%%%%%%%%

\subsection{Learning PageRanks of directed graphs}
\label{subsec:pagerank_directed_graphs}
\qquad\\

The original definition of PageRank and our experiments in Section \ref{sec:pagerank} suggest that we should employ the inward NDFC matrix representations as our feature engineering method. Before applying our deep learning methods for learning PageRank, we would like to introduce an intuitive adjustment to NDFC matrix representation as a secondary option for feature engineering necessary for learning PageRank. Then we shall use both versions of NDFC matrices in several examples. Hopefully this adjustment inspires further ideas for advanced feature engineering methods based on higher order NDF representations. 

The PageRank was originally introduced for hyperlink analysis of webpages in the internet, see \cite{BP}. As an iterative algorithm the PageRank of a node $v$ in the directed graph of webpages (with hyperlinks as directed edges) is defined as
\[
PR_{n+1}(v) = \sum_{u\in C_1^{in}(v)} \frac{PR_n(u)}{deg^{out}(u)},
\]
where $PR_n(x)$ is the PageRank of a node $x$ in the $n$-th iteration. The denominator of the summand takes into account the adverse effect of out-degrees of in-neighbors of $v$ on its PageRank. A similar phenomenon can be observed while building the NDFC matrix representation of nodes of a directed graph. For example, assume graph $G_0$ consists of webpages and consider the second row of the inward NDFC matrix of node $H$ in $G_0$ w.r.t. vanilla intervals list. We have $C_1^{in}(H)=\{A, I\}$, and so the second row of the matrix $ndfc^{in}(H)$ is the mean of the inward vanilla NDF vectors of $A$ and $I$, that is 
\[
\frac{(1,0,1,0) + (0,0,2,0)}{2}.
\] 
One observes that $I$ has only one out-neighbor which is $H$ and does not lower the chance of $H$ to be visited after $I$. But $A$ has 3 out-neighbors $\{H, J, C\}$, and besides $H$, the other two out-neighbors of $A$ may attract some of the visitors of $A$ and consequently lower the chance of $H$ to be visited after $A$. Therefore, giving the same weight to the NDF vectors of $A$ and $I$ does not seem right, so we need to discount the inward NDF vectors of in-neighbors of $H$ proportionate to their out-degrees. One obvious solution would be to divide the inward NDF vectors of in-neighbors of $H$ by their out-degrees, just like the iterative formula for PageRank mentioned above.

On the other hand, we note that discounting the NDF vectors must be applied for all circles of positive radius. So besides the first row of the NDFC matrix, the rest of rows will be adjusted by the discounting process. Regarding these considerations, we propose the following discounted version of the inward NDFC matrix representation:

\begin{definition}
	\label{def:discout-function}
	Let $G=(V, E)$ be a directed graph. The {\bf order $r$ discounted inward NDFC matrix representation} with respect to an intervals list $\mathcal{I}=\{I_1,I_2,\cdots,I_m\}$ assigns an $(r+1)\times m$ matrix $ndfc_r^{in\_dis}(v)$ to each node $v\in V$ such that its first row is the same as the first row of $ndfc_r^{in}(v)$ and, for $k>1$, its $k$-th row is defined by	
	\begin{equation*}
		\frac{1}{s_{k-1}^{in}(v)} \sum_{u\in C_{k-1}^{in}(v)} \frac{dndf^{in}(u)}{deg^{out}(u)}, \qquad \text{when } s_{k-1}^{in}(v)\neq 0,
	\end{equation*}
	and the zero vector of dimension $m$ when $s_{k-1}^{in}(v)=0$.
\end{definition}

When $u$ is an in-neighbor of some node, we have $deg^{out}(u) \geq 1$, so the denominator of the summand in the above definition is non-zero. One also notes that considering the mapping $u\mapsto \frac{1}{deg^{out}(u)}$ as the {\bf discount function} is only one simple proposal and finding the best definition for the discount function depends on the downstream application and it requires further research. In the following examples, we use both versions (discounted and not discounted) of the inward NDFC matrix representation as feature vectors to train deep learning models for learning PageRanks of directed graphs.

\begin{example}
	\label{exam:wiki-vote-pagerank}
	Let $G=(V,E)$ be the directed graph of ``Wikipedia vote network'' \cite{LHK}, in which Wikipedia pages are considered as nodes and votes between them are considered as directed edges. This graph has 7115 nodes. To build a list of starting points according to the in-degrees list of $G$, we apply the method explained in Remark \ref{rem:costume-sp} and choose the following list of starting points:
	\[
	\{0, 1, 2, 4, 7, 12, 19, 30, 47, 72, 110, 167, 237, 307, 450\}
	\]
	Then we compute the order 5 inward NDFC matrix representation of nodes of $G$ w.r.t. this list of starting points. After flattening these matrices into one dimensional arrays, we obtain our feature vectors each with 90 entries. As usual, we compute the PageRank of nodes using the Python package NetworkX as the target variable. To balance target variables and feature vectors, we multiply all PageRanks of nodes by 5000. The dataset is shuffled and divided into train/test sets, 5000 nodes for training and the rest of 2115 nodes for evaluating the model success. The rest of details of the training process is as in Example \ref{exam:fb-co-pagerank} except that we choose the learning rate 0.0005 to avoid overfitting. The average inaccuracy of the prediction of PageRank on the test set is 13.09\%. Afterwards, we repeat this experiment with the order 5 discounted inward NDFC matrices and obtain the average inaccuracy rate of 13.87\% on the test set. Due to various randomness in the training process, there is a few percentage points fluctuation around the average inaccuracy rate. Therefore, we can conclude that both versions of NDFC matrices perform almost similarly on this graph. 
\end{example}

In the following example, we try our methods on a larger directed graph:

\begin{example}
	\label{exam:epinions-pagerank}
	Let $G=(V,E)$ be the directed graph of ``Epinions social network'' \cite{RAD}, in which the nodes are the members of the site epinions.com and directed edges denote who trust whom. This graph has 75879 nodes and its maximum in-degree is 3035. Again we apply the method explained in Remark \ref{rem:costume-sp} to build a list of starting points according to the in-degrees list of $G$. The result is the following list of starting points:
	\begin{eqnarray*}
		&&\{ 0, 1, 2, 3, 4, 5, 7, 9, 11, 14, 17, 21, 25, 30, 36, 43, 51, 60, 70, 82, 96, 112, 130, 151, 175, 203, \\
		&&\quad 235, 272, 315, 365, 422, 488, 558, 628, 698, 768, 838, 908, 978, 1100, 1200, 1450, 2800 \}
	\end{eqnarray*}
	
	Then we compute the order 3 inward NDFC matrix representation of nodes of $G$ w.r.t. this list of starting points, which yields matrices of size $4\times 43$. The feature vectors obtained from flattening these matrices have 172 entries. As usual, we compute the PageRanks of nodes using the Python package NetworkX as the target variables. To balance target variables and feature vectors, we multiply all PageRanks of nodes by 10000. The dataset is shuffled and divided into train/test sets, 20000 nodes for training and the rest of 55879 nodes for evaluating the model success. The rest of details of the training process is as in Example \ref{exam:fb-co-pagerank} except that we train this model only for 1000 epochs to avoid overfitting. The average inaccuracy of the prediction of PageRank on the test set is 27.196\%. Afterwards, we repeat this experiment with the order 3 discounted inward NDFC matrices and obtain the average inaccuracy of 19.425\% on the test set. We can attribute this notable improvement to applying discounting in our feature engineering.
\end{example}

%%%%%%%%%%%%%%%%%%%%%     Subsection     %%%%%%%%%%%%%%%%%%%%%

\subsection{PageRanks of undirected graphs using discounted NDFC matrix representations}
\label{subsec:pagerank_undirected_discounted}
\qquad\\

Every undirected graph can be considered as directed graph if we consider its symmetric undirected edges as two directed edges opposite to each other. In the directed graph obtained this way all inward and outward concepts and constructions we discussed in this section are equivalent and in particular the discounting process is simply dividing by the degree of the neighbor. That is, given an undirected graph $G=(V,E)$, $v\in V$ and $k>0$, the $k$-th row of $ndfc_r^{dis}(v)$, the {\bf order $r$ discounted NDFC matrix representation} of $v$, is defined by
\begin{equation*}
	\frac{1}{s_{k-1}(v)} \sum_{u\in C_{k-1}(v)} \frac{dndf(u)}{deg(u)}, \qquad \text{when } s_{k-1}(v)\neq 0,
\end{equation*}
and the zero vector when $s_{k-1}(v)=0$. In the following examples we apply the discounted NDFC matrix representations to learn PageRanks of two undirected graphs discussed in Section \ref{sec:pagerank}, i.e. the graph of facebook pages of companies and the graph of facebook pages of media. 

\begin{example}
	\label{exam:fb-co-pagerank-discounted}
	In Example \ref{exam:fb-co-pagerank}, we replace the NDFC matrices with the discounted NDFC matrices and train the model with the same details as before. The trained model has the average inaccuracy of 9.09\% on the test set. We notice that discounted NDFC matrices speed up the training process slightly and during 2000 epochs some overfitting occurs. So we reduce the number of epochs to 1500 and repeat the experiment. This time the trained model has even the lower average inaccuracy of 8.069\%.
\end{example}

The above example clearly suggests that the discounted NDFC matrix representation is slightly more suitable for learning PageRanks of undirected graphs. However using the discounted NDFC matrices does not show an improvement in the following example:

\begin{example}
	\label{exam:fb-media-pagerank-discounted}
	In Example \ref{exam:fb-media-pagerank-ndfc}, we replace the NDFC matrices with the discounted NDFC matrices. The average inaccuracy of trained model on the test set is 13.143\%.
\end{example}

Our conclusion is that the discounting the NDF vectors according to the out-degrees of nodes has a mixed effect in improving the quality of feature vectors for learning PageRank. A comprehensive research is needed to determine exact conditions under which this technique can produce better results. 
 
%%%%%%%%%%%%%%%%%%%%%     Section     %%%%%%%%%%%%%%%%%%%%%
\section{Conclusion and Final Remarks}
\label{sec:conclusion}

In this article, we introduced the following constructions based on local data gathered from the neighborhoods of nodes of graphs:
\begin{itemize}
	\item A general family of parametric centrality measures.
	\item Static and dynamic versions of NDF vector representations of nodes.
	\item Higher order NDF matrix representations of nodes.
	\item An effective aggregation method on higher order NDF matrix representations.
	\item A discounting process for engineering more efficient feature vectors.
\end{itemize}

Using these constructions, we proposed the following methods:

\begin{itemize}
	\item An approximation of the neighborhoods of nodes by trees of height 2.
	\item A method for graph isomorphism testing based on NDF vectors and RNDFC matrices.
	\item An easy way to visualize the neighborhoods of nodes as color maps.
	\item An inductive deep learning algorithm for learning PageRank which is applicable to dynamic graphs.
	\item An inductive deep learning algorithm for learning closeness centrality which is applicable to dynamic graphs.
\end{itemize}

The above list of applications is not exhaustive and we expect our graph embeddings and machine leaning methods find more applications in other subjects related to machine learning and data science on graphs, network science, graph isomorphism testing, graph labeling, etc. 

\noindent{\large {\bf Final remarks. }} There are several important points that need to be highlighted here: 

\begin{itemize}
	\item Our focus in this article was to define NDF vector embeddings and their higher order matrix representations and other feature engineering methods. We did not pay a through attention to the architecture of deep learning algorithms we have used. Therefore, we think interested readers will find a number of untouched research ideas concerning various designs of deep learning algorithms according to the applications mentioned here or new applications. For example, possibility of learning other centrality measures using NDF embeddings.
	\item As we already said multiple times, our way of training depends on a number of random procedures such as parameter initialization, data shuffling and splitting, etc. Therefore, one should expect minor differences while verifying the performance of our methods. 
	\item The complete set of codes for all of constructions and examples in this article will be available in the author's GitHub account. We hope the curious reader can perform new experiments with our code and gain more insights. 
	\item Definitely, the most important choice to be made in our methods is choosing the right list of starting points according to various statistics of the graph and the task at hand. We strongly encourage readers interested in our methods to pay special attention to this issue as it is in the heart of NDF embeddings.
	\item An interesting research idea is to feed the NDF vectors or their higher order matrix representations into well-known graph representation learning models and graph neural network models as initial state. Due to the competitive performance of NDF and RNDFC embeddings in  graph isomorphism testing, it is expected that those deep learning models gain more expressive power because of the NDF embeddings.
\end{itemize}

%%%%%%%%%%%%%%%%%%%%   bibliography   %%%%%%%%%%%%%%%%%%%%%

\end{document}